\journal{Control Engineering Practice}
\pgfplotsset{compat=newest}
\tikzset{external/force remake}
\newtheorem{defn}{Definition} 
\newtheorem{thm}{Theorem} 
\newtheorem{lem}{Lemma} 
\newenvironment{proof}{\textit{Proof.}}{~\hfill\rule{0.75em}{0.75em}\\}
\newcommand{\R}{\mathbb{R}}
\newcommand{\Comp}{\mathbb{C}}
\newcommand{\cH}{\mathcal{H}}
\newcommand{\cP}{\mathcal{P}}
\newcommand{\cR}{\mathcal{R}}
\newcommand{\cS}{\mathcal{S}}
\newcommand{\vx}{\mathbf{x}}
\newcommand{\vw}{\mathbf{w}}
\newcommand{\vK}{\mathbf{K}}
\newcommand{\vy}{\mathbf{y}}
\newcommand{\vz}{\mathbf{z}}
\newcommand{\vP}{\mathbf{P}}
\newcommand{\vtheta}{\bm{\theta}}
\newcommand{\vDK}{\bm{\Delta K}}
\newcommand{\vQ}{\mathbf{Q}}
\newcommand{\myRe}{\text{Re}}
\newtheorem{rem}{Remark} 
\newcommand{\Hinf}{\cH_{\infty}}
\newcommand{\Htwo}{\cH_2}
\newcommand{\DefinedAs}[0]{\mathrel{\mathop:}=}
\newcommand{\bigSigma}{\overline{\sigma}}
\newcommand{\HinfNormG}{\left\| G(\vK_t, s) \right\|_\infty}
\newlength\fheight
\newlength\fwidth
\newlength\fheightTwo
\newif\ifcommenttorolf
\begin{document}

\begin{frontmatter}

\title{Structured Controller Parameter Tuning for Power Systems\tnoteref{mytitlenote}}
\tnotetext[mytitlenote]{This work has been partially funded by the German Federal Ministry for Economic Affairs and Energy under Grant number 03ET7541A in the frame of the DynaGridCenter project, and under Grant number 0325685A in the frame of the IREN2 project.}


\author[mymainaddress,siemensmunich]{Amer Me{\v s}anovi{\'c}} 
\ead{amer.mesanovic@siemens.com}

\author[ulrichsaddress]{Ulrich M{\"u}nz}
\ead{ulrich.muenz@siemens.com}

\author[siemensmunich]{Andrei Szabo}
\author[siemenserl]{Martin Mangold}

\author[siemenserl]{Joachim Bamberger}
\author[siemensmunich]{Michael Metzger}
\author[siemenserl]{Chris Heyde}
\author[siemenserl]{Rainer Krebs}

\author[mymainaddress]{Rolf Findeisen}
\ead{rolf.findeisen@ovgu.de}

\address[mymainaddress]{Laboratory for Systems Theory and Automatic Control, Otto-von-Guericke-University Magdeburg, Germany}
\address[siemensmunich]{Siemens AG, Otto-Hahn-Ring 6, Munich, Germany}
\address[siemenserl]{Siemens AG, Freyeslebenstr. 1, Erlangen, Germany}
\address[ulrichsaddress]{Siemens Corp., 755 College Road East, Princeton, NJ, USA}

\begin{abstract}
	Reliable and secure operation of power systems becomes increasingly challenging as the share of volatile generation rises, leading to largely changing dynamics.
	Typically, the architecture and structure of controllers in power systems, such as voltage controllers of power generators, are fixed during the design and buildup of the network. 
	As replacing existing controllers is often undesired and challenging, 
	setpoint adjustments, as well as tuning of the controller parameters, are possibilities to counteract changing dynamics. 
	We present an approach for fast and computationally efficient adaptation of parameters of structured controllers based on $\Hinf$ optimization, also referred to as structured $\Hinf$ controller synthesis, tailored towards power systems. 
	{The goal of the tuning is to increase the robustness of the system towards disturbances.} 
	Conditions are established that guarantee that the approach leads to stability.
	The results are verified in a testbed microgrid consisting of six inverters and a load bank, as well as in several simulation studies. {Furthermore, the performance of the approach is compared to other tuning approaches, thereby demonstrating significantly reduced computation times.} 
	The proposed method improves the system robustness, as well as the time-response to step disturbances and allows structured controller tuning even for large networks.
\end{abstract}

\begin{keyword}
power system control, structured controller synthesis, H-infinity design, linear matrix inequalities, distributed energy system, power oscillation damping, optimization
\end{keyword}

\end{frontmatter}


\allowdisplaybreaks

\section{Introduction}

Reliable and secure electric power supply is vital for modern life. Power systems must operate without interruptions, despite unknown disturbances, such as outages of components, unknown load dynamics, and changes in power generation.
Power systems consist of prosumers, such as power plants, wind turbines and users, which are interconnected by a power grid, c.f. Fig.~\ref{fig.GridModel}.
The reliable and safe operation of power systems is ``guaranteed'' today by a complete automation system, consisting of, e.g., PID controllers, notch filters, and lead-lag filters, controlling power system components spanning from power plants to inverters, flexible AC transmission system elements and loads~\cite{kundur93a}. 
These automation systems result from careful considerations based on years of practical experience and operation.
Tuning of the corresponding controller parameters is very important for reliable operation.
{This is currently guaranteed by tuning and verification during the installation of a component.}
The resulting controllers are typically not re-parameterized until a large problem in the system occurs{~\cite{ENTSOEReport}}. Such manual tuning has proven to be sufficient as long as the network and power plants do not change significantly. While variations in the grid are constantly present due to load fluctuations or generator outages, these variations are often predictable and can be considered during the manual tuning procedure.


{Increasing amounts of renewable generation lead to large changes in the operation and the resulting dynamic behavior of the power systems~\cite{Ren21Report2018}.
Depending on the weather conditions, renewable generation can change constantly and can shift geographically. Furthermore, if the weather conditions are not suitable for renewable generation, the percentage of conventional generation needs to increase. These changes require new approaches e.g. for optimal power flow calculation or unit commitment.
In this work, we focus on the effects of renewable generation on power system dynamics.
The constantly shifting mix of renewable and conventional generation can lead to time-varying dynamics, i.e. oscillatory modes~\cite{AlAli14,crivellaro2019beyond, markovic2019understanding,mesanovic2019hierarchical}.}
If not handled, the controllers in large power systems, which are tuned today for fixed oscillatory modes, become less effective, increasing the risk of blackouts{~\cite{crivellaro2019beyond}. A simulation study showing the described effects can be found in~\cite{mesanovic2019hierarchical}}.
Thus, new control methods are necessary to improve the robustness of power networks and to account for the changing dynamics. {Furthermore, these methods will have to consider the dynamic behavior of conventional, as well as of inverter-based generation.}

We propose to adapt the parameters of controllers already present in power systems to the seemingly changing operating conditions. {This approach has been recognized in the literature as a possible solution to handle time-varying dynamics~\cite{crivellaro2019beyond, markovic2019understanding}. The overall tuning goal of the proposed approach is the reduction of the system $\Hinf$ norm, which increases the system robustness towards disturbances and typically improves the time-domain behavior. For this purpose, we introduce a controller parameter tuning method, which is applicable to a broad class of control structures and is scalable towards large scale systems, i.e. is applicable to power systems ranging from microgrids to transmission systems.}
We present an iterative convex optimization approach for structured $\Hinf$ controller synthesis of linear systems, which optimizes the parameters of existing controllers to current conditions in the system. We provide certificates which guarantee stability of the optimal tuned system and evaluate the effectiveness of the approach in realistic simulations, experimentally considering an inverter-only microgrid testbed.

Controller synthesis for power systems typically exploits $\Hinf$ optimization, $\Htwo$ optimization, and pole placement, c.f.~\cite{Raoufat16,Pipelzadeh17,Zhu03,befekadu2005robust,MahmoudiNAPS15,Preece13,wu2016input,Schuler14}.
Other control design and analysis approaches are sensitivity analysis~\cite{Marinescu09,Rouco01,Borsche15}, sliding mode controller design~\cite{Liao17}, the use of reference models~\cite{Yagooti16}, coordinated switching controllers~\cite{Liu16}, genetic algorithms based tuning~\cite{Taranto99}, model predictive control~\cite{Fuchs14}, and time-discretization~\cite{lei2001optimization}. An overview of different methods for power oscillation damping can, for example, be found in~\cite{obaid2017power}.
However, most of the works either: consider simplified power system models~\cite{Borsche15,MahmoudiNAPS15,Liao17}; or add and design new controllers on top of the existing power system model~\cite{Fuchs14,Raoufat16,Pipelzadeh17,Liu16,Preece13,wu2016input,Schuler14,Zhu03}. These solutions require significant modification of existing control structures, which makes practical application complex and expensive.
Very few publications consider the optimization of existing controller parameters~\cite{befekadu2005robust,marinescu2019residue,kammer2017decentralized}. The approaches in these works use heuristics~\cite{marinescu2019residue}, or assume a specific dependency on the parameters~\cite{befekadu2005robust,kammer2017decentralized}.

Controller synthesis based on $\Hinf$ optimization has received significant attention in the last decades.
First approaches in the 1980s used algebraic Riccati equations for the $\Hinf$ controller synthesis~\cite{doyle1989state}. In the 1990s, approaches based on linear matrix inequalities became popular, leading to convex solutions for unstructured state-feedback controller synthesis based on the bounded-real lemma~\cite{gahinet1994linear}. If the controller structure is fixed and only the parameters are tuned using $\Hinf$ optimization, as is the case in this work, one refers to structured $\Hinf$ controller design~\cite{Scherer13,apkarian2018structured}. 
Structured controller synthesis, exploiting the bounded real lemma and additional improvements and refinements are, e.g., used in~\cite{Schuler14, Hassibi1999,ibaraki2001rank,dinh2012combining,Han04,Karimi07,schuler2011design,befekadu2006robust,Mesanovic18ACC}.
Alternative tuning approaches, as non-smooth optimization~\cite{HIFOO,apkarian2006nonsmooth}, bisection~\cite{kanev2004robust} etc. exist. 

In recent years, the focus in structured $\Hinf$ optimization shifted towards more efficient methods to find local minima, as local solutions are often sufficient in practice. These methods are usually based on frequency sampling, leading to fast and reliable synthesis~\cite{kammer2017decentralized,apkarian2018structured,boyd2016mimo}. This, however, removes the guarantee that a stable controller will be obtained. To solve this issue,~\cite{kammer2017decentralized,apkarian2018structured} introduce stability constraints based on the Nyquist criterion. 
In~\cite{boyd2016mimo}, the assumption is made that the controlled plant is asymptotically stable, in which case the boundedness of the $\Hinf$ norm of the system sensitivity matrix is a necessary and sufficient condition for stability.

\begin{figure}[tb]
	\centering
	\includegraphics[width=1\columnwidth]{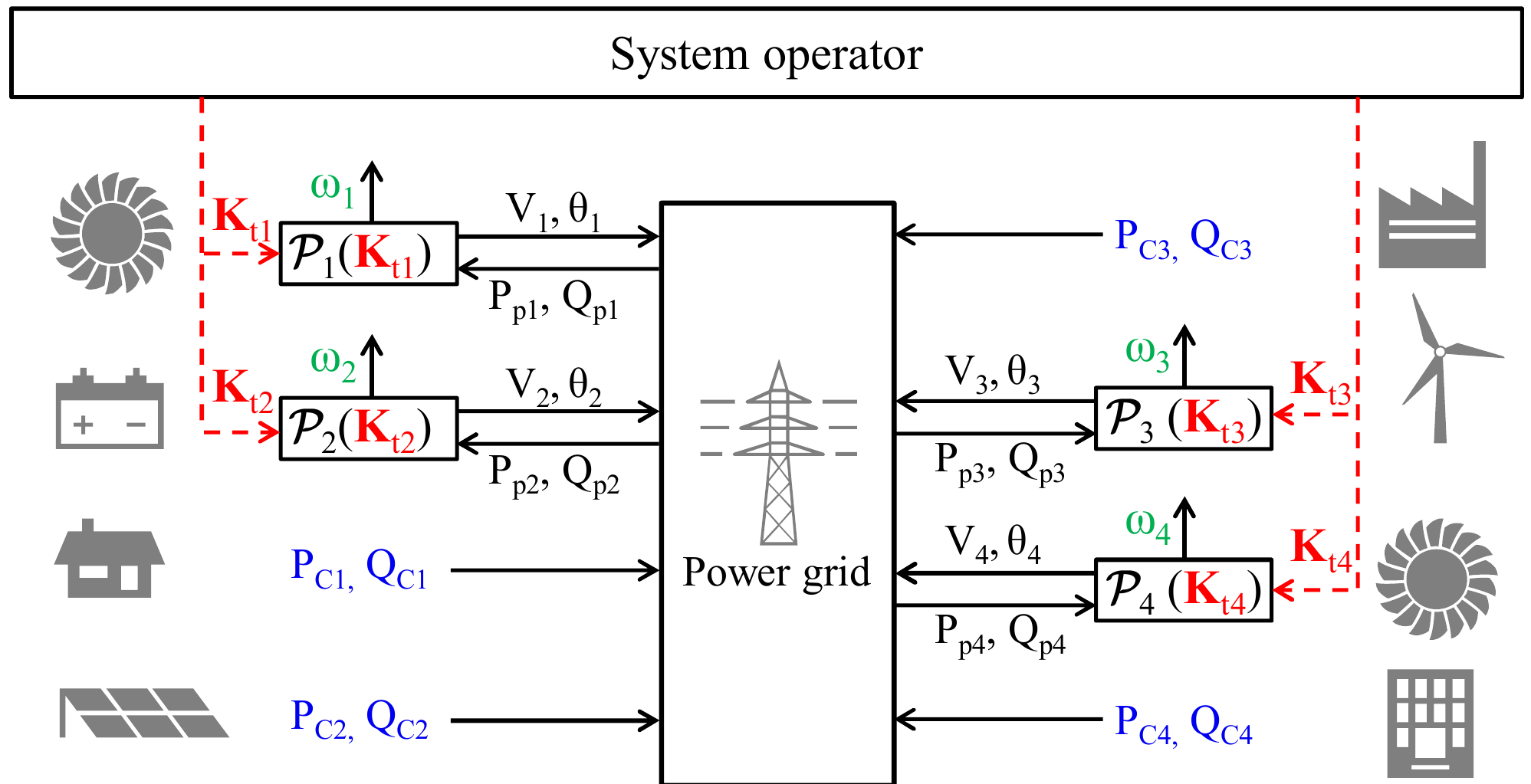}
	\caption{Exemplary power system consisting of four dynamic prosumers $\cP_i$ and four static prosumers $P_{ci}$, $Q_{ci}$. The tunable controller parameters $\vK_{ti}$ of dynamic prosumers are marked red. The static prosumers, marked with blue, are disturbance inputs into the system. The frequencies $\omega_i$ of each $\cP_i$ are performance outputs, marked green. For details, see Section~\ref{sec.PSModel}.}
	\label{fig.GridModel}
\end{figure}

We introduce in this work a stability guarantee for structured $\Hinf$ optimization based on frequency sampling, without adding new constraints in the optimization problem. 
Doing so, we do not require the stability of the open loop system, as is often necessary~\cite{boyd2016mimo}. Previous works considered structured controller synthesis with application to power systems~\cite{Mesanovic17ISGT,Mesanovic18ACC,mesanovic18ACDC} based on the application of the bounded real lemma. 
{The main contributions of this work are:
\begin{itemize}
	\item A structured $\Hinf$ controller tuning method which allows nonlinear parameter dependencies and which is scalable towards large scale systems.
	\item A stability certificate for the developed method.
	\item Two numerical studies on power transmission systems with 190 and 469 states, showing the applicability of the proposed approach.
	\item An evaluation of the approach on a testbed microgrid consisting of six inverters and a load-bank, showing the practical applicability of the approach on existing hardware.
	\item  A comparison of the approach to other $\Hinf$ tuning methods, which underlines the significantly faster computation times.
\end{itemize}
}

The remainder of this work is organized as follows:
Section~\ref{sec.PSModel} outlines the problem and derives suitable models.
We introduce the proposed structured controller synthesis approach with the stability proof in Section~\ref{sec.TuningMethod}. Section~\ref{sec.NumericalEval} presents the simulation studies considering power systems with 10 and 53 power plants. The experimental validation is presented in Section~\ref{sec.ExpEval}, {whereas Section~\ref{sec.PerfComparison} shows the computational performance of the approach} before providing conclusions in Section~\ref{sec.Conclusion}.


\subsection{Mathematical preliminaries}
\label{subsec.MathPrelim}
$\bigSigma(\cdot)$ denotes in the following the largest singular value of a matrix, equivalently $\overline{\lambda} (\cdot)$ denotes the largest eigenvalue of a  matrix, whereas $(\cdot)^*$  denotes the conjugate transpose of a matrix. The notation $\succ$ ($\succeq$), and $\prec$ ($\preceq$) is used to denote positive (semi)definiteness and negative (semi)definiteness of  a matrix, respectively. We use $j$ to denote the imaginary unit, $\R_{\ge 0}$ denotes the set of non-negative real numbers, $\Comp$ denotes the set of complex numbers, and $\Comp_{>0}$ denotes the set of complex numbers with a positive real part. The notation $\cR\Hinf$ is used to denote the set of proper, real rational and stable transfer matrices. {Furthermore, we define the operators $< (\le)$, $> (\ge)$, and $|\cdot|$ element-wise for vectors. }

\begin{defn}\cite{lunze2013regelungstechnik}\label{def.MIMOPoles}
	A complex number $s_p$ is a pole of the transfer matrix $G(s) : \Comp \rightarrow \Comp^{n_y \times n_w}$, when at least one element $G_{ij}(s)$ of $G(s)$ has a pole at $s_p$.
\end{defn}
We will reference the Bounded real Lemma, which states the following
\begin{lem} (Bounded-real Lemma) \label{lemma.BRL}
	\cite{gahinet1994linear} Consider the continuous-time transfer function $G(s)$ with the realization $G(s) = D + C (sI - A)^{-1} B$. 
	The following statements are equivalent
	\begin{itemize}
		\item The system $G(s)$ is asymptotically stable and $\left\| G(s) \right\|_\infty < \gamma$.
		\item There exits a symmetric positive definite solution $P \succ 0$ (Lyapunov matrix) to the linear matrix inequality (LMI)
		\begin{align}
		\begin{pmatrix}
		A^T P + P A & P B & C^T \\
		B^T P & -\gamma I & D^T\\
		C & D & -\gamma I
		\end{pmatrix} \prec 0. \label{eq.BRLConstraint}
		\end{align}
	\end{itemize}
\end{lem}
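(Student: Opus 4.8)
The plan is to prove the two implications separately, using the quadratic Lyapunov candidate $V(x) = x^T P x$ as the bridge between the frequency-domain norm bound and the matrix inequality, with the Schur complement of \eqref{eq.BRLConstraint} as the central algebraic tool. Throughout I attach to the realization the state-space dynamics $\dot x = A x + B w$, $z = C x + D w$, so that $z$ is the response of $G(s)$ to the input $w$.

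First I would treat the routine direction, that feasibility of \eqref{eq.BRLConstraint} implies both stability and the norm bound. Taking the Schur complement with respect to the negative-definite $(3,3)$ block $-\gamma I$ collapses the $3\times 3$ inequality into the equivalent condition
\begin{align}
\begin{pmatrix}
A^T P + P A + \tfrac{1}{\gamma} C^T C & P B + \tfrac{1}{\gamma} C^T D \\
B^T P + \tfrac{1}{\gamma} D^T C & -\gamma I + \tfrac{1}{\gamma} D^T D
\end{pmatrix} \prec 0 .
\end{align}
Evaluating the associated quadratic form at $(x,w)$ and recognizing $\tfrac{1}{\gamma}\| C x + D w\|^2 = \tfrac{1}{\gamma}\|z\|^2$ together with $x^T(A^T P + P A) x + 2 x^T P B w = \dot V$, this is exactly the dissipation inequality $\dot V < \gamma \|w\|^2 - \tfrac{1}{\gamma}\|z\|^2$. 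Setting $w = 0$ forces $A^T P + P A \prec 0$ with $P \succ 0$, so $A$ is Hurwitz. Integrating the dissipation inequality from rest (with $x(0)=0$, and $x(t)\to 0$ since $A$ is Hurwitz and $w \in L_2$) yields $\|z\|_2 < \gamma \|w\|_2$ for every nonzero input, and by Parseval this $L_2$-gain bound is precisely $\left\| G(s) \right\|_\infty < \gamma$.

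The converse — constructing a feasible $P \succ 0$ from the hypotheses that $A$ is Hurwitz and $\left\| G(s) \right\|_\infty < \gamma$ — is the hard part and the main obstacle. Here I would invoke the strict Kalman--Yakubovich--Popov lemma: the frequency-domain inequality $\gamma^2 I - G(j\omega)^* G(j\omega) \succ 0$, which holds for all $\omega \in \R \cup \{\infty\}$ by hypothesis, is equivalent to the solvability of \eqref{eq.BRLConstraint} for some symmetric $P$. Evaluation at $\omega = \infty$ supplies the feedthrough condition $\bigSigma(D) < \gamma$ that makes the reduced $(2,2)$ block negative definite. An equivalent route runs through the associated algebraic Riccati equation: stability of $A$ together with the strict norm bound guarantees that the corresponding Hamiltonian matrix has no eigenvalues on the imaginary axis, delivering a stabilizing Riccati solution from which a strictly feasible $P$ is recovered. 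The delicate bookkeeping lies in keeping every inequality strict while passing between the frequency domain, the Hamiltonian/Riccati picture, and the LMI.

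Finally, the positive definiteness of the $P$ produced above is not a separate difficulty but a clean consequence: once $P=P^T$ satisfies the strict LMI, the same Schur reduction as in the forward direction yields $A^T P + P A \prec -\tfrac{1}{\gamma} C^T C \preceq 0$, hence $A^T P + P A \prec 0$. Since $A$ is Hurwitz, $P$ is then the unique solution of a Lyapunov equation with positive-definite right-hand side, and is therefore \emph{positive definite}. I expect the KYP/Riccati existence step to absorb essentially all of the effort, with the remaining manipulations being the Schur-complement algebra and the integration argument sketched above.
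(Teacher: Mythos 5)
The paper offers no proof of Lemma~\ref{lemma.BRL} at all: it is quoted directly from the literature with the citation \cite{gahinet1994linear}, so there is no in-paper argument to compare yours against. Judged on its own terms, your sketch is the standard modern proof of the bounded-real lemma---Schur complement plus the dissipation inequality $\dot V < \gamma\|w\|^2 - \tfrac{1}{\gamma}\|z\|^2$ for the easy direction, and the strict Kalman--Yakubovich--Popov lemma (or the Hamiltonian/Riccati detour) for the converse---and the architecture is sound, including the closing observation that positive definiteness of $P$ is automatic once $A$ is Hurwitz, since $P$ then solves a Lyapunov equation with positive-definite right-hand side and equals $\int_0^\infty e^{A^T t} Q\, e^{A t}\,dt \succ 0$. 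Your appeal to the strict KYP lemma is also legitimate without minimality assumptions, because $A$ being Hurwitz already rules out imaginary-axis eigenvalues, which is all Rantzer's strict version requires.

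One step as written does not deliver quite what you claim. In the forward direction you conclude $\|z\|_2 < \gamma\|w\|_2$ \emph{for every nonzero input} and assert that this ``is precisely $\left\| G(s)\right\|_\infty < \gamma$.'' It is not: a strict inequality for each individual $w$ only yields $\left\| G(s)\right\|_\infty \le \gamma$, since the per-input gaps may shrink to zero as the input energy concentrates near the worst-case frequency. The standard repair exploits the strictness of the LMI \emph{uniformly}: because \eqref{eq.BRLConstraint} is a strict matrix inequality on a finite-dimensional space, there exists $\epsilon>0$ such that the dissipation inequality sharpens to $\dot V \le \gamma\|w\|^2 - \tfrac{1}{\gamma}\|z\|^2 - \epsilon\|w\|^2$, whence $\|z\|_2^2 \le (\gamma^2-\gamma\epsilon)\|w\|_2^2$ and the $L_2$-gain is bounded away from $\gamma$. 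A cognate strictness issue lurks in your Riccati route (the stabilizing solution satisfies an \emph{equation}, not a strict inequality, and must be perturbed, e.g.\ by running the construction at some $\gamma'<\gamma$); you flag this yourself, and the KYP route sidesteps it by producing a strictly feasible symmetric $P$ directly. With the $\epsilon$-uniformity fix inserted, your proposal is a complete and correct blueprint.
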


\section{Optimal Controller Tuning for Power Systems}
\label{sec.PSModel}
Figure~\ref{fig.GridModel} shows an exemplary power system with the basic idea of retuning controller parameters. 
It consists of heterogeneous components, such as power plants, renewable generation, storage systems and households.
We name these components prosumers, as they can either produce or consume electric power. Thereby, we distinguish between dynamic and static prosumers. 

Dynamic prosumers, such as power plants, are dynamic systems with internal states, denoted with $\cP_i$. They posses structured controllers, whose parameters $\vK_{ti}$ can be tuned, marked with red in Fig.~\ref{fig.GridModel}. We consider dynamic prosumers $\cP_i$ which control their voltage magnitude $V_i$ and phase $\theta_i$ at the point of connection, whereas their power infeed into the grid $P_{pi}$ and $Q_{pi}$ is the external input for the controllers. This is a standard description, e.g. for conventional power plants with synchronous generators~\cite{kundur93a}, as depicted in Fig.~\ref{fig.GridModel}, where $V_i$ and $\theta_i$ are outputs of $\cP_i$, and $P_{pi}$, $Q_{pi}$ are the inputs. However, the applicability of the modeling is not restricted to this dynamic prosumer type, and dynamic prosumers which have $P_{pi}$ and $Q_{pi}$ as output can also be considered.

Static prosumers, such as loads and some renewable generation, have no internal states and are characterized through their active and reactive power infeed, denoted with $P_{ci}$ and $Q_{ci}$, respectively. Figure~\ref{fig.GridModel} depicts four static prosumers, marked with blue. We collect infeeds of static prosumers into vectors $\vP_s$ and $\vQ_s$, which are considered as external inputs. Renewable generation and loads are often modeled as static prosumers~\cite{poolla2019placement,pddotnuschel2018frequency}. The power infeed of these elements cannot be fully controlled, and we consider these infeeds as the disturbance inputs for the controller tuning. Static prosumers also model components with a slow dynamic behavior, such as aggregated powers of small prosumers. For this reason, a subset of $\vP_{s}$ and $\vQ_{s}$ is chosen as the disturbance input $\vw_s$. The voltage phasors of buses with static prosumers have a magnitude $V_{si}$ and angle $\vtheta_{si}$.
Static and dynamic prosumers are coupled through the power grid.

Depending on the infeed of renewable generation and load, the system dynamic behavior changes. If the system operator thereby notices that the resiliency of the system decreases, it tunes the controller parameters $\vK_{ti}$ of dynamic prosumers in order to increase the system resiliency. The reparameterization process is depicted with red dashed lines in Fig.~\ref{fig.GridModel}. Thereby, slow communication is needed.

In power systems, the frequencies of the dynamic prosumers, defined with $\omega_i = \dot{\theta}_i$, where $\theta_i$ is the angle of the voltage phasor of $\cP_i$, are important and these are typically used to asses the system performance~\cite{kundur93a}. Thus, choosing the vector of frequencies as the performance output
\begin{align}
\vy = \begin{pmatrix} \omega_1 & ... & \omega_N	\end{pmatrix}^T, \label{eq.perfOutInit}
\end{align}
is a sensible choice. Here $N$ denotes the number of dynamic prosumers, and $\omega_i$ is the voltage frequency of $\cP_i$. The performance outputs are marked green in Fig.~\ref{fig.GridModel}.

In the following subsections, we outline the structure and dynamics of the power grid and prosumers and present possible models.


\subsection{Power grid}
\label{subsec.PowerGridModel}

The power grid consists of power lines, cables, transformers etc. which interconnect dynamic and static prosumers. In principle, each power line and cable, 
has its own dynamics, which, however, have time constants which are orders of magnitude smaller than the generation dynamics relevant for stability studies, 
which are often slower than 10 Hz~\cite{kundur93a}. For this reason, the power grid dynamics are often neglected~\cite{schiffer2016survey,kundur93a}. Consequently, the grid, i.e. the power flow, is typically described by the algebraic power flow equations
\begin{subequations} 
	\label{eq.PowerFlow}
	\begin{align}
	&P_i = \sum_{j=1}^{N_B} \nolimits V_{Bi} V_{Bj}\big( G_{cij}\cos \Delta\theta_{Bij}+B_{sij}\sin \Delta\theta_{Bij} \big)\label{eq.activepower} \\
	&Q_i = \sum_{j=1}^{N_B} \nolimits V_{Bi} V_{Bj} \big( G_{cij}\sin \Delta\theta_{Bij}-B_{sij}\cos \Delta\theta_{Bij}  \big),\label{eq.reactivepower}
	\end{align}
\end{subequations}
where $N_B$ is the number of buses (nodes) in the power system and is equal to the total number of dynamic and static prosumers in the grid, $P_i$ and $Q_i$, are the injected active and reactive powers into the i-th bus (node) in the grid by a dynamic prosumer ($P_{pi}$, $Q_{pi}$) or a static prosumer ($P_{si}$, $Q_{si}$),
$V_{Bi}$ and $\theta_{Bi}$ are the magnitude and angle of the voltage phasor at the i-th bus from a dynamic prosumer ($V_i$, $\theta_i$) or a static prosumer ($V_{si}$, $\theta_{si}$),
and $G_{cij}$ and $B_{sij}$ are the elements of the  conductance and susceptance matrix of the grid \cite{kundur93a}. 

{
As the share of fast, inverter-based generation in power systems increases, algebraic modeling of the power flow is facing increasing scrutiny. Fast control loops of inverters may cause unwanted interactions with power grid dynamics, raising the need for dynamic power flow equations~\cite{markovic2019understanding}. However, the considered numerical examples, and the testbed microgrid system did not have this issue. Including dynamic power flow equations into the modeling is part of future work. 
}



\subsection{Dynamic prosumers and tunable parameters}
\label{subsec.ProsumerModel}

The proposed structure allows for arbitrary dynamic prosumers. In this section, two exemplary prosumers and their models are outlined. 

\begin{figure}[tb]
	\centering
	\includegraphics[width=0.7\columnwidth]{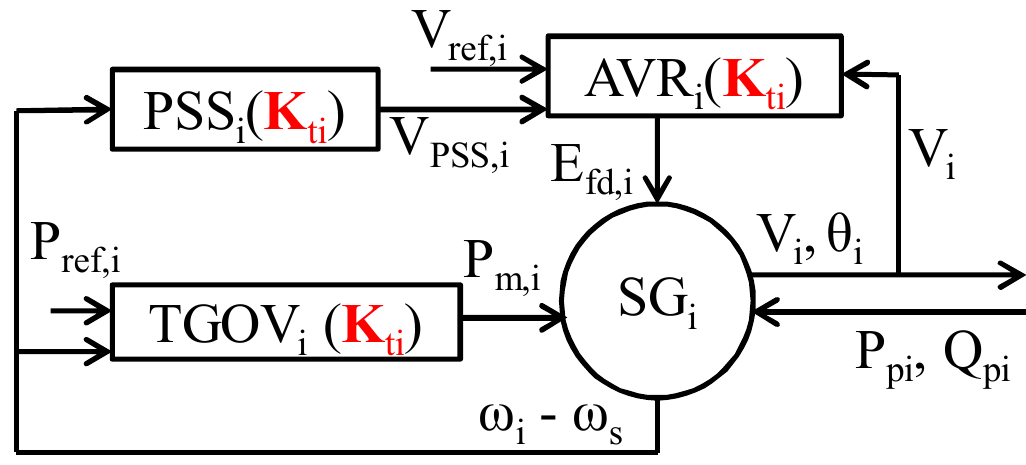}
	\caption{Simplified model of a dynamic prosumer $\cP_i$, a power plant. It consists of a synchronous generator (SG$_i$), automatic voltage regulator and exciter (AVR$_i$), power system stabilizer (PSS$_i$), and of a turbine and governor model (TGOV$_i$).}
	\label{fig.SGModel}
\end{figure}

\begin{figure}[tb]
	\centering
	\includegraphics[width=0.8\columnwidth]{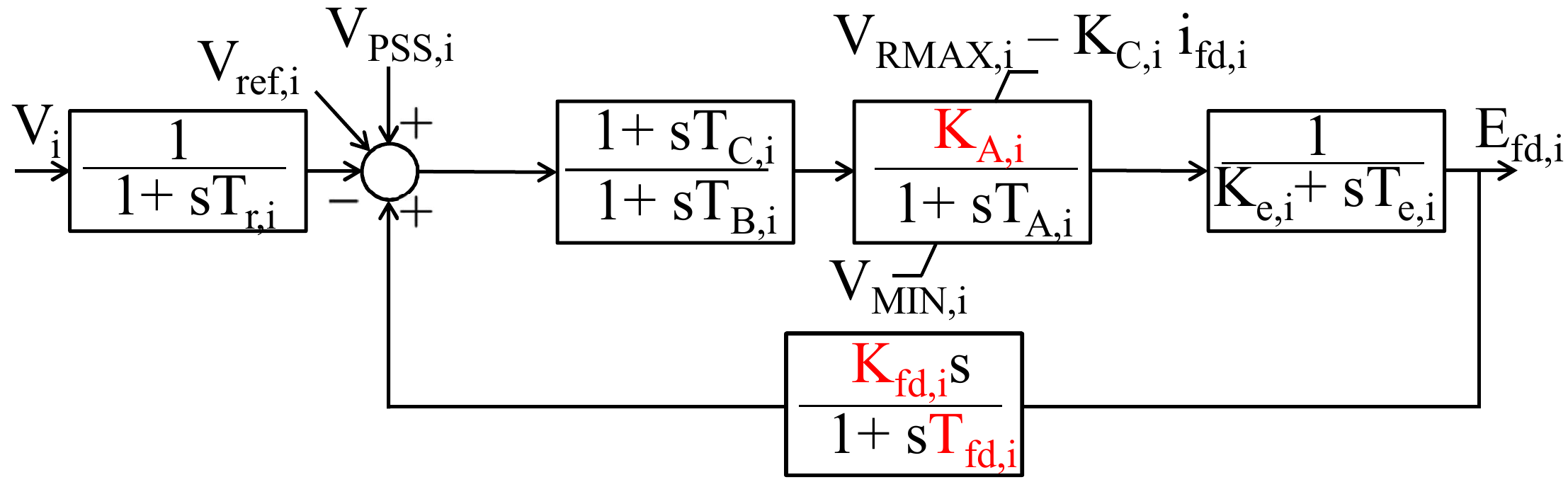}
	\caption{{Dynamic model of AVR$_i$~\cite{MathworksExciter}, where $T_{r,i}$ is the transducer time constant, $T_{C,i}$ and $T_{B,i}$ are dynamic gain reduction time constants, $K_{A,i}$ is the AVR gain, $T_{A,i}$ is the AVR lag time constant, $K_{e,i}$ and $T_{e,i}$ are the exciter parameters, and $K_{fd,i}$ and $T_{fd,i}$ additional damping coefficients of the AVR. We assume that $K_{A,i}$, $K_{fd,i}$, and $T_{fd,i}$, marked red, are tunable.}}
	\label{fig.Exciter}
\end{figure}

\begin{figure}[t]
	\centering
	\includegraphics[width=0.9\columnwidth]{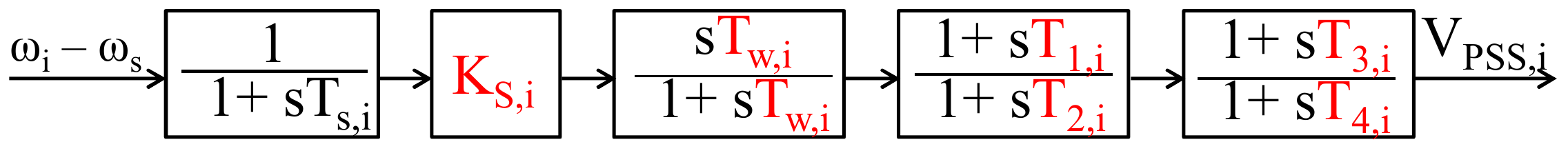}
	\caption{{Dynamic model of the simple power system stabilizer (taken from~\cite{moeini2015open,kundur93a}), where $K_{S,i}$ is the PSS gain, $T_{w,i}$ is the washout time constant, $T_{1,i}$-$T_{4,i}$ are the lead-lag filters time constants, and $T_{s,i}$ is the sensor time constant. All of the PSS parameters are tunable, except the sensor time constant.}}
	\label{fig.PSS}
\end{figure}

\begin{figure}[t]
	\centering
	\includegraphics[width= 1\columnwidth]{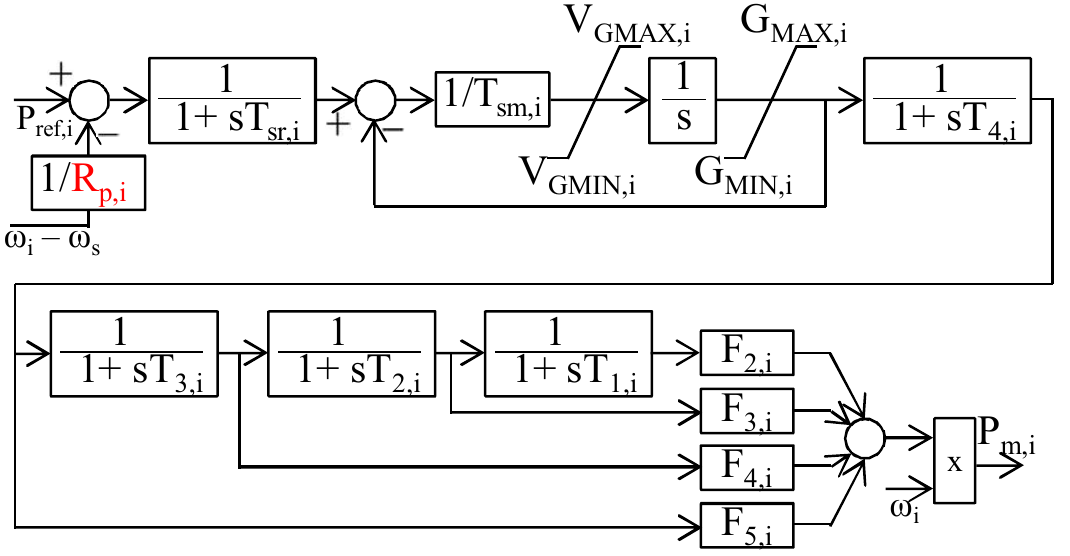}
	\caption{{Dynamic model of the turbine and governor from~\cite{TGOVMathworks}. The frequency droop gain of the governor $R_{p,i}$ is an optimization variable.}}
	\label{fig.TGOV}
\end{figure}

\subsubsection{Power plants}

Power plants	 often consist of a synchronous generator with controllers and actuators, as shown in Fig.~\ref{fig.SGModel}. 
We consider the 6-th order model for the synchronous generator (SG$_i$). 
For details, we refer the interested readers to~\cite{kundur93a}.

The automatic voltage regulator and exciter (AVR$_i$) represents hardware and controllers which control the voltage at the power plant terminals $V_{i}$ as close as possible to a reference value $V_{ref,i}$. The output of AVR$_i$ is the field winding voltage $E_{fd,i}$ which is an input of the SG$_i$. {An exemplary model of an AVR with three tunable parameters, marked red, is shown in Fig.~\ref{fig.Exciter}.}
It is important to note that automatic voltage regulators can reduce the stability margin in power systems~\cite{kundur93a}. For this reason, power plants are sometimes equipped with power system stabilizers (PSS$_i$). PSSs are analogue or digital controllers, with the task to improve the system stability and increase the damping of oscillations in power systems.
{An exemplary PSS is visualized in Fig.~\ref{fig.PSS}.}
The governor and turbine (TGOV$_i$) control the generator frequency by adapting the mechanical power $P_{m,i}$ transferred to the synchronous generator, see Fig.~\ref{fig.TGOV} for an example of a TGOV model with one tunable parameter, marked red.
{
In practice, many different controllers are used, see e.g.~\cite{IEEEExciters06}. Examples for these controllers are shown in Figs.~\ref{fig.Exciter}-\ref{fig.TGOV}, which are used in a subsequent numerical study.} 

\subsubsection{Inverters}

\begin{figure}[tb]
	\centering
	\includegraphics[width=0.95\columnwidth]{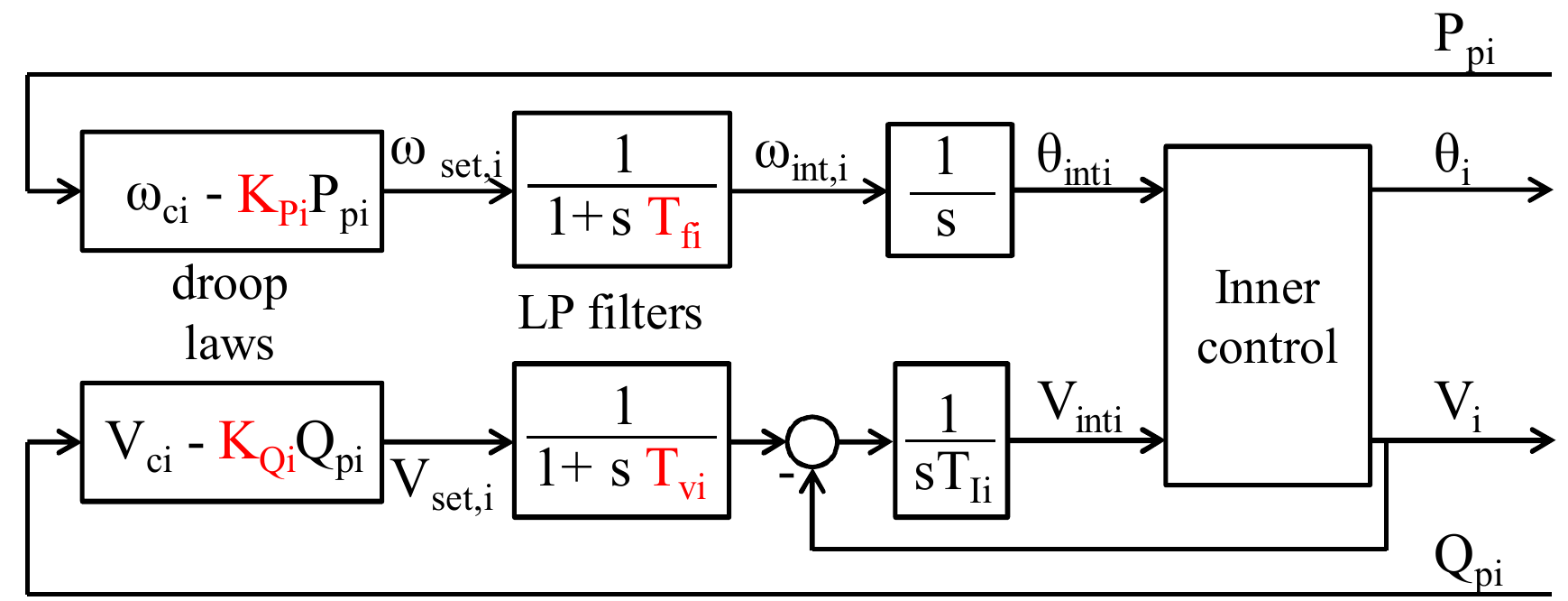}
	\caption{Simplified model of a dynamic prosumer $\cP_i$, an inverter with so-called droop controls.}
	\label{fig.InvModel}
\end{figure}

We consider inverters which control the voltage and frequency at their terminals, called voltage-source inverters (VSI), or inverters in grid-forming mode. 
For dynamics below 10 Hz, modeling the high-frequency switching of power transistors in the inverters is often not necessary. Instead, the transistors are approximated as ideal voltage sources with droop controllers for voltage amplitude and frequency, c.f. Fig.~\ref{fig.InvModel}.
The DC link capacitor of the inverters is not considered, as we assume that the internal control of the inverters is fast to compensate for the changes on the DC side. Such simplifications comply with measurements shown in~\cite{rahmoun2017mathematical}, and with the experiments considered in Section~\ref{sec.ExpEval}.
{Note that the presented inverter model, representing the behavior of a SINAMICS inverter~\cite{sinamics}, does not have fast inner control in a synchronously rotating frame~\cite{markovic2019understanding}.}

In grid-forming mode, the i-th inverter controls the magnitude $V_i$ and phase $\theta_i$ of the voltage on its terminals, whereas the active and reactive power infeed of the inverter result from the power flow. The frequency setpoint of the inverter $\omega_{seti}$ is determined by the so-called droop equation
\begin{align}
& \omega_{seti} = \omega_{ci} - K_{Pi} P_{pi},\label{eq.freqDroop}
\end{align}
where $\omega_{ci}$ is the frequency setpoint with zero load, $P_{pi}$ is the measured active power infeed of the i-th inverter, and $K_{Pi}$ is the frequency droop gain. The setpoint $\omega_{seti}$ is filtered with a first-order low-pass filter with the time constant $T_{fi}$ and integrated to obtain the internal voltage phase $\theta_{inti}$. Analogously, the voltage setpoint $V_{seti}$ is determined with the so-called droop equation
\begin{align}
V_{seti} = V_{ci} - K_{Qi} Q_{pi},
\end{align}
where $V_{ci}$ is the voltage setpoint with no reactive power generation, $Q_{pi}$ is the measured reactive power infeed of the i-th inverter, and $K_{Qi}$ and is the frequency droop gain. The setpoint $V_{seti}$ is filtered with a time constant $T_{vi}$, and serves as the setpoint for the integral voltage controller. The output of the integral controller is the internal voltage $V_{inti}$.

The resulting $\theta_{inti}$ and $V_{inti}$ are used as references to the internal control loops which run at a much higher frequency. As the internal control loops are not modeled due to their fast dynamics, we assume $\omega_{i} = \omega_{inti}$, $\theta_{i} = \theta_{inti}$ and $V_{i} = V_{inti}$.

The tunable inverter parameters are marked red in Fig.~\ref{fig.InvModel}, they are: $\vK_{ti} = (K_{Pi}, K_{Qi}, T_{fi}, T_{vi})^T$. {Note that we only tune parameters of the outer control loops, operating on a slower time scale than e.g. current controllers. This, together with the fact that $T_{fi}, T_{vi}$ are restricted to values above 50ms, prevents undesired interactions between inverter control and grid dynamics, enabling the use of algebraic power flow equations in the considered testbed.}



\subsection{Overall problem setup}
\label{subsec.CoupledModel}

Combining the power grid equations~\eqref{eq.PowerFlow} with the prosumer models leads to differential-algebraic nonlinear equations
\begin{subequations} \label{eq.nonlinearModel}
	\begin{align}
	\dot{\vx} =& f(\vx,\vw,\vK_t) \\
	0 =& h(\vx,\vw,\vK_t), \label{eq.nonlinearModel.alg}
	\end{align}
\end{subequations}
where $\vx \in \R^{\cdot N_x}$ combines all dynamic prosumer states,
$\vw \in \R^{n_D}$ is the vector of disturbance inputs, represented by a subset of $\vP_C$ and $\vQ_C$, 
$\vK_t \in \R^{N_t}$ is the vector of tunable controller parameters of all dynamic prosumers,
$f$ describes the prosumer dynamics,
and $h$ represents the power flow equation~\eqref{eq.PowerFlow}. We combine the system dynamics~\eqref{eq.nonlinearModel} with the performance output~\eqref{eq.perfOutInit} to obtain the nonlinear model of our power system.

{Our goal is to increase of system robustness towards disturbances from $\vw$ (a subset of $\vP_c$ and $\vQ_c$) and to improve the time domain behavior of the system.}
For simplicity, we linearize~\eqref{eq.nonlinearModel} around a known steady-state $\vx_0$ with the known input $\vw_0$. While this is an approximation, it allows us to use linear systems methods. It has furthermore been shown to be sufficient even for large-scale disturbances~\cite{poolla2019placement}.
{Moreover, if large disturbances are considered, several linearization points may be used}. After eliminating the linearized algebraic equation~\eqref{eq.nonlinearModel.alg}, we obtain the overall system
	\begin{align} \label{eq.linearizedModel1}
	\dot{ \widetilde \vx} &= \widetilde A(\vK_t) \widetilde \vx +  \widetilde B (\vK_t) \vw \qquad
	\vy  =\widetilde  C \: \widetilde \vx.
	\end{align}
The system matrix $\widetilde A(\vK_t)$ has an eigenvalue at zero, as the coupling power flow equation~\eqref{eq.PowerFlow} is invariant under phase offsets $\widetilde \theta_i = \theta_i + \delta \theta$, where $\delta  \theta \in \R$ is identical for all $i$. This zero eigenmode can be eliminated~\cite{wu2016input}, leading to
	\begin{align}  \label{eq.linearizedModel}
	\dot{ \vx} &=  A(\vK_t)  \vx +   B (\vK_t) \vw \qquad 
	\vy  = C \vx,
	\end{align}
where $A(\vK_t)$ has no parameter-independent eigenvalues on the imaginary axis, and $A(\vK_t)$ has all eigenvalues in the left half plane for suitable choice of $\vK_t$. 
The resulting state space system can be written in the frequency domain as
\begin{align}
G(\vK_t,s) = C \left(sI - A(\vK_t)\right)^{-1} B(\vK_t).
\end{align}

\section{Method for structured $\Hinf$ parameter tuning}
\label{sec.TuningMethod}

This section defines an optimization algorithm which minimizes the $\Hinf$ norm of $G(\vK_t,s)$, denoted with $\|G(\vK_t,s) \|_\infty$, and defined by~\cite{boyd1985subharmonic}
\begin{align}
\|G(s)\|_\infty \DefinedAs  \text{sup}_{s \in \Comp_{>0}} \: \bigSigma \: (G(s)) &= \text{sup}_{\omega \in \mathbb{R}} \: \bigSigma \: (G(j \omega)). \label{eq.HinfNormDefinition}
\end{align}
Note that the last equality is only valid for stable systems. The $\Hinf$ norm is chosen, as it represents the maximal amplification of amplitude of any harmonic input signal in any output direction. Thus, minimizing the $\Hinf$ norm minimizes the worst-case amplification of oscillation frequencies after a disturbance. 
Thereby, the system robustness is additionally improved.
Minimization of $\|G(\vK_t, s)\|_\infty$ is achieved by optimizing the vector of tunable parameters $\vK_t$. For notational convenience, we drop writing the dependency on the tunable parameter vector $\vK_t$ explicitly, when it is not necessary, and write only $G(s)$. It is assumed, however, that $G(s)$ is always a function of the tunable parameters.

We consider that the system $G(\vK_t,s) \in \cR\Hinf$ is  an asymptotically stable and detectable linear time-invariant multi input multi output (LTI MIMO) system. 
It has a nonlinear dependency on the vector of tunable controller parameters $\vK_t$. 

The basis for the proposed parameter tuning method is the following theorem, which can be found in the literature in analogous forms, see e.g.~\cite{boyd2016mimo,kammer2017decentralized} and references therein.
\begin{thm}[Semi-infinite $\Hinf$ constraint] \label{thm.InfinityConstr}
	Given a detectable and asymptotically stable system $G(s)$. The $\Hinf$ norm of $G(s)$ is smaller than $\gamma \in \R$ if and only if
	\begin{align} \label{eq.InfConstr}
	\begin{pmatrix}
	\gamma I     & G(j\omega) \\
	G(j\omega)^* & \gamma I
	\end{pmatrix} \succ 0, \quad \forall \omega \in \R_{\ge 0}.
	\end{align}
\end{thm}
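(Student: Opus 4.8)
The plan is to reduce the matrix inequality~\eqref{eq.InfConstr} to a pointwise bound on the largest singular value of $G(j\omega)$, and then to invoke the frequency-domain characterization of the $\Hinf$ norm in~\eqref{eq.HinfNormDefinition}. First I would note that the diagonal blocks $\gamma I$ must themselves be positive definite for~\eqref{eq.InfConstr} to hold, so it suffices to treat $\gamma > 0$; for $\gamma \le 0$ both sides of the asserted equivalence are false, since $\|G(s)\|_\infty \ge 0$. With $\gamma I \succ 0$, taking the Schur complement of the lower-right block shows that~\eqref{eq.InfConstr} is equivalent to
\begin{align}
\gamma^2 I - G(j\omega)^* G(j\omega) \succ 0, \quad \forall \omega \in \R_{\ge 0}. \nonumber
\end{align}

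Next I would identify the eigenvalues of $G(j\omega)^* G(j\omega)$ with the squared singular values of $G(j\omega)$, so that the displayed condition holds exactly when $\overline{\lambda}\big(G(j\omega)^* G(j\omega)\big) = \bigSigma(G(j\omega))^2 < \gamma^2$, that is, $\bigSigma(G(j\omega)) < \gamma$, for every $\omega \in \R_{\ge 0}$. Because $G(-j\omega) = \overline{G(j\omega)}$ has the same singular values as $G(j\omega)$, this pointwise bound over $\R_{\ge 0}$ coincides with the bound over all of $\R$. Thus~\eqref{eq.InfConstr} is equivalent to $\bigSigma(G(j\omega)) < \gamma$ for all $\omega \in \R$.

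It then remains to connect this pointwise bound to $\|G(s)\|_\infty < \gamma$. Since $G(s)$ is asymptotically stable, \eqref{eq.HinfNormDefinition} gives $\|G(s)\|_\infty = \sup_{\omega \in \R} \bigSigma(G(j\omega))$, so $\|G(s)\|_\infty < \gamma$ immediately yields $\bigSigma(G(j\omega)) < \gamma$ for all $\omega$. The converse direction is where I expect the main obstacle: a strict inequality holding at every individual frequency does not by itself force a strict bound on the supremum. To close this gap I would use that $\bigSigma(G(j\omega))$ is continuous in $\omega$ and that, for a proper stable realization, $G(j\omega) \to D$ as $\omega \to \infty$; for the strictly proper systems~\eqref{eq.linearizedModel} considered here one has $D = 0$, so $\bigSigma(G(j\omega)) \to 0$. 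Hence $\bigSigma(G(j\omega))$ is a continuous function that attains its supremum at some finite $\omega^\star$, giving $\sup_{\omega} \bigSigma(G(j\omega)) = \bigSigma(G(j\omega^\star)) < \gamma$ and therefore $\|G(s)\|_\infty < \gamma$. This establishes the equivalence.
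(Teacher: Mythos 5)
Your proof is correct, and its core is the same route the paper takes: identify $\bigSigma(G(j\omega))^2 = \overline{\lambda}\left(G(j\omega)^* G(j\omega)\right)$, translate the pointwise bound into $\gamma^2 I - G(j\omega)^* G(j\omega) \succ 0$, and pass to \eqref{eq.InfConstr} by a Schur complement (the paper runs the chain in the opposite direction, but the content is identical). Where you genuinely go beyond the paper is that one of your additions closes a real gap the paper's proof silently steps over. The paper's first displayed equivalence, $\|G(s)\|_\infty < \gamma \Leftrightarrow \overline{\lambda}\left(G(j\omega)^* G(j\omega)\right) < \gamma^2$ for all $\omega \in \R_{\ge 0}$, asserts without comment that a strict inequality holding at every individual frequency forces a strict bound on the supremum. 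As you observe, this is not automatic, and in fact it fails for general proper systems: for $G(s) = s/(s+1)$ with $\gamma = 1$, the matrix in \eqref{eq.InfConstr} is positive definite at every finite $\omega$, yet $\|G\|_\infty = 1 \not< \gamma$. Your repair---continuity of $\bigSigma(G(j\omega))$ together with $G(j\omega) \to D = 0$ for the strictly proper systems \eqref{eq.linearizedModel}, so that the supremum is attained at some finite $\omega^\star$---is exactly what is needed, and it implicitly records that the theorem as literally stated requires either strict properness or the inclusion of $\omega = \infty$ in \eqref{eq.InfConstr}; the paper's proof never addresses this. Your remaining bookkeeping is also sound where the paper is silent: the Schur-complement step does require $\gamma I \succ 0$, so disposing of $\gamma \le 0$ first (both sides of the equivalence are then false) is correct practice, and the reduction from $\omega \in \R$ in \eqref{eq.HinfNormDefinition} to $\omega \in \R_{\ge 0}$ via $G(-j\omega) = \overline{G(j\omega)}$ for real-rational $G$ is the right justification. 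In short: same decomposition and same key identity as the paper, but your version is the rigorous one.
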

\begin{proof}
	Since $\bigSigma \: (G(j \omega))^2 = \overline{\lambda} (G(j \omega)^* G(j \omega))$, it follows
	\begin{align}
	& \| G(j \omega) \|_\infty < \gamma \\
	\Leftrightarrow \quad &  \overline{\lambda} (G(j \omega)^* G(j \omega)) < \gamma^2, \quad \forall \omega \in \R_{\ge 0}\\
	\Leftrightarrow \quad &  G(j \omega)^* G(j \omega) - \gamma^2 I \prec 0, \quad \forall \omega \in \R_{\ge 0}.
	\end{align}
	By using the Schur complement on the last expression, we obtain~\eqref{eq.InfConstr}.
\end{proof}
Theorem~\ref{thm.InfinityConstr} allows to directly formulate an optimization problem for the $\Hinf$ minimization of $G(\vK_t,j \omega)$
\begin{subequations} \label{eq.InfOptProb}
	\begin{align}
	\min_{\gamma, \vK_t}  & \quad \gamma \\
	\text{s.t.} \quad & \begin{pmatrix}
	\gamma I     & G(\vK_t, j\omega) \\
	G(\vK_t, j\omega)^* & \gamma I
	\end{pmatrix} \succ 0, \quad \forall \omega \in \R_{\ge 0} \label{eq.infiniteConstraint} \\
	& \vK_{tmin} \leq \vK_t \leq \vK_{tmax}. \label{eq.infProb.ContConstr}
	\end{align}
\end{subequations}
The last inequality is a box constraint on the controller parameters, determined by practical considerations or physical constraints.
As~\eqref{eq.infiniteConstraint} needs to be satisfied for every $\omega \in \R_{\ge 0}$, Problem~\eqref{eq.InfOptProb} is semi-infinite.
This formulation is similar to those considered in~\cite{boyd2016mimo,kammer2017decentralized,apkarian2018structured} and the references therein.
One way to find a finite-dimensional approximation to~\eqref{eq.InfOptProb} is to use a finite, but large enough, number of frequency samples at which constraint~\eqref{eq.infiniteConstraint} is evaluated
\begin{subequations} \label{eq.FiniteOptProb}
	\begin{align}
	\min_{\gamma, \vK_t}  & \quad \gamma \\
	\text{s.t.} \: \: & \Phi(G(\vK_t, j\omega_\mu), \gamma) = \begin{pmatrix}
	\gamma I     & G(\vK_t, j\omega_\mu) \\
	G(\vK_t, j\omega_\mu)^* & \gamma I
	\end{pmatrix} \succ 0, \nonumber \\
	& \qquad \qquad \qquad \forall \omega_\mu \in \Omega \label{eq.finiteConstraints} \\
	& \vK_{tmin} \leq \vK_t \leq \vK_{tmax}.
	\end{align}
\end{subequations}
Here $\Omega$ is the discrete set of sampled frequencies with $N_\omega$ elements. Since the problem scales linearly with $N_\omega$, a reasonably large number of elements in $\Omega$ can be chosen such that it covers the required frequency range with satisfactory density~\cite{boyd2016mimo}. Note that the choice for $\Omega$ is problem specific and needs to be adapted to the considered frequency range.
With a sufficiently large number of samples in $\Omega$, the local optimum of~\eqref{eq.FiniteOptProb} can be arbitrarily close to the optimum of~\eqref{eq.InfOptProb}.
The advantage of~\eqref{eq.FiniteOptProb} compared to methods based on Lemma~\ref{lemma.BRL}, with respect to scalability, are severalfold. Approaches based on Lemma~\ref{lemma.BRL} introduce a positive-definite (Lyapunov) matrix as an optimization variable, which has the same size as the closed loop system, causing the number of optimization variables to increase quadratically with the number of states in the closed-loop system. Additionally, the size of the matrix in~\eqref{eq.BRLConstraint} scales linearly with the number of states, inputs, and outputs of the system. Problem~\eqref{eq.FiniteOptProb} does not have the Lyapunov matrix $P$ as an optimization variable, and the size of the problem only depends on the number of inputs and outputs of the system, making the controller synthesis generally faster. 

Problems~\eqref{eq.InfOptProb} and~\eqref{eq.FiniteOptProb}, however, do not guarantee system stability in a straightforward manner, i.e. a controller parameterization obtained as a solution of~\eqref{eq.FiniteOptProb} does not necessarily stabilize the system in addition to minimizing the cost function representing the $\Hinf$ norm. To overcome this problem, one can introduce constraints based on the Nyquist criterion, c.f.~\cite{apkarian2018structured,kammer2017decentralized}, which guarantee closed-loop stability. If the open-loop system, i.e. the system without controllers, is stable, then the boundedness of the $\Hinf$ norm of the system sensitivity matrix ensures the stability of the closed-loop system~\cite{boyd1991linear,boyd2016mimo}.
\begin{rem}
	Note that, even though boundedness of the system $\Hinf$ norm is a necessary and sufficient condition for system stability,~\eqref{eq.FiniteOptProb} does not guarantee the synthesis of a stable controller parameterization. This is because the last equality in~\eqref{eq.HinfNormDefinition} is applicable if and only if $G(s)$ is exponentially stable~\cite{boyd1985subharmonic}. We overcome this by providing a suitable stability certificate for the solution of Problem~\eqref{eq.FiniteOptProb}.
\end{rem}

To this end, we introduce two lemmas and propose a theorem for the stability certificate.
\begin{lem} \label{lem.PoleLimitSV}
	Given a detectable and exponentially stable system $G(\vK_t,s)$ with a fixed parameter vector $\vK_t$ and the finite set of poles $\cS_H$. The largest singular value of $G(s)$, denoted with $\bigSigma(G(s))$,  approaches $+\infty$ as $s$ approaches any $s_{pij} \in \cS_H$, where $s_{pij}$ denotes the $p$-th pole of the transfer function in the $i$-th row and $j$-th column of $G(s)$.
\end{lem}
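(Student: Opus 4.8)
The plan is to reduce the claim to the elementary fact that the largest singular value of a matrix dominates the modulus of each of its individual entries, and then to invoke the entrywise pole definition. No deep machinery is required; the only matrix-analytic ingredient is a uniform lower bound on $\bigSigma(\cdot)$.

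First I would record the entrywise bound: for any matrix $M \in \Comp^{n_y \times n_w}$ and any index pair $(i,j)$,
\begin{align}
\bigSigma(M) = \max_{\|v\|_2 = 1} \|M v\|_2 \ge \|M \mathbf{e}_j\|_2 = \Big( \sum_{k} |M_{kj}|^2 \Big)^{1/2} \ge |M_{ij}|,
\end{align}
where $\mathbf{e}_j$ is the $j$-th canonical basis vector. This inequality holds uniformly, independently of $M$, and it is the heart of the argument.

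Next I would fix a pole $s_{pij} \in \cS_H$. By Definition~\ref{def.MIMOPoles}, $s_{pij}$ is a pole of the scalar, proper, real-rational entry $G_{ij}(s)$. Writing this entry in coprime form $G_{ij}(s) = n_{ij}(s)/d_{ij}(s)$, the point $s_{pij}$ is a root of $d_{ij}$ of some multiplicity $m \ge 1$ at which $n_{ij}$ does not vanish; consequently $|G_{ij}(s)| \to +\infty$ as $s \to s_{pij}$. Since $G(s) = C(sI - A(\vK_t))^{-1} B(\vK_t)$ is real-rational with the finite pole set $\cS_H$, the poles are isolated, so there is a punctured neighbourhood of $s_{pij}$ on which $G$ and each of its entries is analytic, and the limit is understood over this punctured neighbourhood. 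I would then combine the two steps by applying the entrywise bound with $M = G(s)$ and the specific index pair $(i,j)$:
\begin{align}
\bigSigma(G(s)) \ge |G_{ij}(s)| \to +\infty \quad \text{as } s \to s_{pij},
\end{align}
so that the diverging lower bound forces $\bigSigma(G(s)) \to +\infty$, which is the assertion.

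I do not anticipate a genuine obstacle here. The one point that warrants care is ruling out cancellation: one might worry that the pole of the single entry $G_{ij}$ is masked in the spectral norm by the remaining entries. This is precisely what the entrywise lower bound prevents, since it isolates the offending entry and leaves no room for the other entries to compensate. Finally I would note that exponential stability and detectability are not used for the blow-up itself; they serve only to guarantee that $\cS_H$ is a finite subset of the open left half-plane, so that the poles are isolated and the statement is well posed.
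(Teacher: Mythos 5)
Your proof is correct and takes essentially the same route as the paper's: lower-bound $\bigSigma(G(s))$ by the norm of the $j$-th column via the canonical vector $\mathbf{e}_j$, and then let the pole of the entry $G_{ij}(s)$ force the divergence. Your extra step down to the single entry, $\bigSigma(G(s)) \ge |G_{ij}(s)|$, is a marginally cleaner endpoint than the paper's column-norm expression (which is written with $G_{kj}^2(s)$ where $|G_{kj}(s)|^2$ is meant), but it is the same argument.
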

\begin{proof}
	For clarity of presentation, we present the proof when $s_{pij}$ is a pole with a multiplicity of one. The proof when $s_{pij}$ is a pole with larger multiplicity is analogous.
	Per definition, we have~\cite{zhou1998essentials}
	\begin{align}
	\bigSigma(G(s)) = \max_{\|\vz\|_2 = 1} \| G(s) \vz \|_2.
	\end{align}
	Thus, for all $s_{pij} \in \cS_H$
	\begin{align}
	\lim_{s \rightarrow s_{pij}} \bigSigma(G(s)) & = \lim_{s \rightarrow s_{pij}} \max_{\|\vz\|_2 = 1} \| G(s) \vz \|_2 \nonumber \\
	& \ge \lim_{s \rightarrow s_{pij}}  \| G(s) e_j \|_2, \label{eq.SigmaProof} 
	\end{align}
	where $e_j$ denotes a column vector where the $j$-th row is equal to one and all other elements are zero. The last expression can be reformulated to
	\begin{align}
	&\lim_{s \rightarrow s_{pij}}  \| G(s) e_j \|_2 = \lim_{s \rightarrow s_{pij}}  \left\| \Big(
	G_{1j}(s) \: ...\: G_{ij}(s) \:...\: G_{Nj}(s) \Big)^T  \right\|_2 \nonumber \\
	& = \lim_{s \rightarrow s_{pij}} \sqrt{G_{1j}^2(s) + ... + G_{ij}^2(s) + ... + G_{Nj}^2(s)}, \label{eq.SigmaProofVecNorm}
	\end{align}
	where $G_{ij}(s)$ denotes the single-input-single-output (SISO) transfer function in the i-th row and j-th column of $G(s)$.
	Since $s_{pij}$ is a pole of $G_{ij}(s)$, it follows that $\lim_{s \rightarrow s_{pij}} G_{ij}(s)^2 = +\infty$ and that $\lim_{s \rightarrow s_{pij}}  \| G(s) e_j \|_2 = +\infty$. From~\eqref{eq.SigmaProof}, it directly follows that $\lim_{s \rightarrow s_{pij}} \bigSigma(G(s)) = +\infty$. 
\end{proof}

\begin{lem} \label{lem.ContPoles}
	Given the linear system $G(\vK_t,s)$.
	If the denominator polynomials in $G(\vK_t, s)$ are continuous functions of the controller parameters $\vK_t$, then the location of poles of $G(\vK_t, s)$ are also continuous functions of the controller parameters $\vK_t$.
\end{lem}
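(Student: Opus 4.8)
The plan is to reduce this multivariable statement to a classical fact about scalar polynomials, namely that the roots of a polynomial depend continuously on its coefficients, and then to establish that fact via a contour argument.

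First I would invoke Definition~\ref{def.MIMOPoles}: the poles of $G(\vK_t,s)$ are exactly the poles of its scalar entries $G_{ij}(\vK_t,s)$, i.e. the roots of the corresponding denominator polynomials $d_{ij}(\vK_t,s)$. Since the (finite) union of sets that each depend continuously on $\vK_t$ again depends continuously on $\vK_t$, it suffices to prove the claim for a single scalar denominator $d(\vK_t,s) = \sum_{k=0}^{n} a_k(\vK_t)\, s^k$, whose coefficients $a_k(\vK_t)$ are by assumption continuous in $\vK_t$.

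Next I would fix a reference parameter $\vK_t^0$, let $s_0 \in \Comp$ be a root of $d(\vK_t^0,\cdot)$ of multiplicity $m$, and apply Rouch\'{e}'s theorem on a small circle about $s_0$. Concretely, choose $\epsilon>0$ small enough that $s_0$ is the only root of $d(\vK_t^0,\cdot)$ in the closed disk of radius $\epsilon$ around $s_0$; then $|d(\vK_t^0,s)|$ attains a positive minimum $\delta$ on the bounding circle. Because each $a_k$ is continuous and the circle is compact, $|d(\vK_t,s)-d(\vK_t^0,s)|<\delta$ holds uniformly on the circle once $\vK_t$ is sufficiently close to $\vK_t^0$. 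Rouch\'{e}'s theorem then guarantees that $d(\vK_t,\cdot)$ has exactly $m$ roots, counted with multiplicity, inside the disk. As $\epsilon$ may be chosen arbitrarily small, these roots converge to $s_0$ as $\vK_t \to \vK_t^0$, which is precisely the asserted continuity.

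The main subtlety to address is the meaning of continuity of the pole locations. When several roots coalesce they cannot in general be labelled so as to form continuous single-valued functions of $\vK_t$; the correct and provable statement is continuity of the root multiset in the Hausdorff sense, which is exactly what the Rouch\'{e} argument delivers. A second point I would control is a possible drop in degree: if the leading coefficient $a_n(\vK_t)$ vanished, a pole could escape to infinity. I would exclude this by noting that for the prosumer models of Section~\ref{subsec.ProsumerModel} the number of poles, i.e. the degree of each $d_{ij}$, stays fixed on the admissible parameter range, equivalently that $a_n(\vK_t)$ is bounded away from zero; under this condition all $n$ roots remain finite and the continuity statement holds throughout the box~\eqref{eq.infProb.ContConstr}.
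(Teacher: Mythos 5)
Your proposal is correct, and at the top level it follows the same route as the paper: both reduce via Definition~\ref{def.MIMOPoles} to the roots of the scalar denominator polynomials $d_{ij}(\vK_t,s)$ and then invoke continuity of polynomial roots in the coefficients, composed with the assumed continuity of the coefficients in $\vK_t$. The difference is that the paper stops there, citing the classical root-continuity result (its reference~\cite{uherka1977continuous}) as a black box, whereas you unpack that citation and prove it yourself via Rouch\'{e}'s theorem on a small circle about each root. Your version buys two things the paper's one-line argument leaves implicit: first, a precise meaning for ``continuity of the poles'' when roots coalesce (continuity of the root multiset, which is exactly what the Rouch\'{e} count delivers, since individual roots cannot in general be tracked as continuous single-valued functions); second, the degree-drop caveat, i.e.\ that if the leading coefficient $a_n(\vK_t)$ vanished a pole could escape to infinity, which you correctly exclude by noting the denominator degree is fixed on the admissible parameter box~\eqref{eq.infProb.ContConstr} for the controller structures of Section~\ref{subsec.ProsumerModel} (PID, lead-lag, washout filters, etc.). This last point genuinely matters for how the lemma is used downstream: Theorem~\ref{thm.StabilityInfinite} relies on pole locations $s_p(\vK_t)$ moving by less than $\varphi_{s_p}$ under a small parameter step~\eqref{eq.vDKDef}, which is the multiset-continuity statement you prove, not merely the informal phrasing in the paper. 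So your proof is a self-contained and slightly more careful rendering of the paper's argument rather than a different method.
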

\begin{proof}
	According to Definition~\ref{def.MIMOPoles}, the poles of $G(\vK_t, s)$ are obtained as the roots of denominator polynomials of all elements $G_{ij}(\vK_t,s)$ of $G(\vK_t,s)$. The roots of a polynomial are continuous functions of the polynomial coefficients~\cite{uherka1977continuous}, whereas the denominator polynomial coefficients are continuous functions of the controller parameters. It follows that poles of $G(\vK_t,s)$ are continuous functions of $\vK_t$.
\end{proof}
\begin{rem}
	In Lemma~\ref{lem.ContPoles}, we make the assumption that denominator polynomials in $G(\vK_t, s)$ are continuous functions of the controller parameters $\vK_t$. This assumption is satisfied for almost all practically relevant control elements, such as PID controllers, notch filters, lead-lag filters, washout filters etc. Hence, this assumption does not introduces a significant restriction.
\end{rem}


We can now formulate a stability certificate to validate that the closed-loop is stable.
{
\begin{thm}[Semi-infinite stability certificate] \label{thm.StabilityInfinite}
	Given an initial, exponentially stabilizing parameterization $\vK_{t,0}$ for the detectable system $G(\vK_t,s)$ with the set of poles $\cS_H$. Furthermore, using an iterative solution algorithm which minimizes the value of the cost function of the semi-infinite Problem~\eqref{eq.InfOptProb} in each iteration, thereby producing the parameter vector $\vK_{t,k}$ in the k-th iteration.
	Then there exists a sufficiently small step size $\vDK$,  $\vDK \ge |\vK_{t,k} - \vK_{t,k-1}|$ such that the (local) optimum of~\eqref{eq.InfOptProb} exponentially stabilizes $G$. 
\end{thm}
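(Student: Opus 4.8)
The plan is to prove the statement by induction on the iteration index $k$, showing that every iterate $\vK_{t,k}$ produced by the algorithm stays in the region of exponentially stabilizing parameters, and then to pass to the (local) optimum by a final continuity argument. The core difficulty, already flagged in the Remark following Problem~\eqref{eq.FiniteOptProb}, is that satisfaction of the semi-infinite constraint~\eqref{eq.infiniteConstraint} only certifies boundedness of $\bigSigma(G(j\omega))$ along the imaginary axis and does \emph{not} by itself exclude poles in the open right half plane. Hence the argument cannot be purely algebraic; it must exploit the topology of how the poles move as $\vK_t$ varies, which is exactly what Lemmas~\ref{lem.PoleLimitSV} and~\ref{lem.ContPoles} are designed to supply.

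First I would establish a \emph{uniform barrier} separating the poles from the imaginary axis across all stable iterates. Since the algorithm decreases the cost in each step, $\gamma_k \le \gamma_{k-1} \le \dots \le \gamma_0 < \infty$, where $\gamma_0 = \|G(\vK_{t,0},s)\|_\infty$ is finite because $\vK_{t,0}$ is exponentially stabilizing. For any exponentially stable iterate the $\Hinf$ norm equals the supremum of $\bigSigma(G(j\omega))$ over $\omega$, so Lemma~\ref{lem.PoleLimitSV}, in its contrapositive form, implies that a pole approaching the imaginary axis would force $\bigSigma(G(j\omega))$, and hence $\gamma_k$, to grow without bound. Because $\gamma_k \le \gamma_0$, there must therefore exist a uniform $\varepsilon > 0$ such that every exponentially stable iterate has all poles satisfying $\myRe(s_p) \le -\varepsilon$. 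Detectability is what makes this barrier effective: any mode that could turn unstable is observable and thus appears as a genuine pole of $G(\vK_t,s)$, so no instability can hide from the $\Hinf$ cost.

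Next I would use Lemma~\ref{lem.ContPoles} together with the box constraint~\eqref{eq.infProb.ContConstr} to convert the barrier into a \emph{uniform admissible step size}. The box $[\vK_{tmin},\vK_{tmax}]$ is compact, so the pole locations, being continuous in $\vK_t$, are in fact uniformly continuous there; consequently one can choose $\vDK$ small enough that $|\vK_{t,k}-\vK_{t,k-1}| \le \vDK$ displaces every pole by strictly less than $\varepsilon$. The induction then closes cleanly: the base case $\vK_{t,0}$ is stabilizing by hypothesis; assuming $\vK_{t,k-1}$ is exponentially stable, the barrier gives $\myRe(s_p)\le-\varepsilon$ for its poles, and the step bound moves them by less than $\varepsilon$, so $\vK_{t,k}$ still has all poles in the open left half plane and is exponentially stable. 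Thus every iterate is stabilizing, and by the same continuity the accumulation point of the sequence---the (local) optimum of~\eqref{eq.InfOptProb}---inherits poles with $\myRe(s_p)\le-\varepsilon<0$ and is therefore exponentially stabilizing.

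I expect the main obstacle to be the step from boundedness of the cost to the \emph{uniform} separation $\varepsilon$ in the second paragraph. Lemma~\ref{lem.PoleLimitSV} gives a pointwise blow-up as $s\to s_{pij}$, but turning this into a quantitative lower bound on $|\myRe(s_p)|$ that is uniform over the whole compact parameter box requires controlling the numerator (residue) behaviour as well, so that a vanishing residue cannot cancel the pole and defeat the barrier; this is also the precise point at which detectability is indispensable. Once that uniform bound is secured, the continuity and step-size bookkeeping and the induction itself are routine.
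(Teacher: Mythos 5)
Your proposal is correct and follows essentially the same route as the paper's proof: a barrier region near the imaginary axis obtained from Lemma~\ref{lem.PoleLimitSV} (poles close to the axis would force the $\Hinf$ cost above its monotonically decreasing value), converted into an admissible step size via the pole continuity of Lemma~\ref{lem.ContPoles} over the compact box~\eqref{eq.infProb.ContConstr}, so that no pole can traverse the barrier strip without the solver-forbidden cost increase. The uniformity obstacle you flag --- ensuring the blow-up is not defeated by vanishing residues, uniformly over the parameter box --- is precisely the step the paper itself asserts without further justification when introducing $\varphi_{s_p}$ in~\eqref{eq.varphidef}, so your sketch is, if anything, more explicit about that point than the published argument.
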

}
\begin{proof}
	{
	We first show that the solution algorithm will not allow for poles on the imaginary axis, and afterwards we show that there exists a step size which prevents an unstable parameterization from occurring.}

{
	Since $\bigSigma(G(s))$ is a continuous function of $s$~\cite{de1989analytic}, and $\lim_{s \rightarrow s_{p} \in \cS_H} \bigSigma(G(\vK_{t},s)) = +\infty$ (Lemma~\ref{lem.PoleLimitSV}), there exists $\varphi_{s_{p}} \in \R_{> 0}$, such that for any $\vK_{t}, \vK_{tmin} \leq \vK_t \leq \vK_{tmax}$, the following relation holds
		\begin{align} \label{eq.varphidef}
			\! \! \! \!| s - s_{p}(\vK_t) | \leq \varphi_{s_{p}} \Rightarrow  \bigSigma(G(\vK_{t},s)) \ge \max_{\omega \in \R} \bigSigma(G(\vK_{t},j\omega)).
		\end{align}
		Here $\varphi_{s_{p}}$ defines a neighborhood of $s_{p}$ in which the value of $\bigSigma(G(\vK_{t},s))$ is greater than $\HinfNormG$ for any allowed $\vK_t$. Thus, in a (local) optimum, every $s_P \in \cS_H$ will have at least the distance $\varphi_{s_{p}}$ from the imaginary axis. Otherwise, $\HinfNormG$ would increase. 
		Furthermore, as $s_{p}(\vK_t)$ is continuous (Lemma~\ref{lem.ContPoles}), a maximal step size $\vDK$ exists such that
		\begin{align} \label{eq.vDKDef}
		 \vDK \ge |\vK_{t,k} - \vK_{t,k-1}| \Rightarrow
			| s_{p}(\vK_{t,k}) - s_{p}(\vK_{t,k-1}) | \leq \varphi_{s_{p}}.
		\end{align}
		With the previously defined $\vDK$, $s_{p}$ will never cross from the stable to the unstable region, as this would mean that the value $\left\|G(\vK_{t,k}, s) \right\|_\infty$ would increase in at least one iteration, in which $-\varphi_{s_{p}} \le \myRe(s_p(\vK_{t,k}))<0$, which is not allowed by the solver.
	}

\end{proof}
{
The calculation of the allowed step size vector $\vDK$ from~\eqref{eq.vDKDef} is numerically expensive and conservative. To avoid this step, we introduce the following algorithm: if an unstable parameterization $\vK_{t,k}$ is obtained, the allowed step size is iteratively multiplied by a scalar $\alpha \in (0,1)$ until a stable parameterization is obtained. This solution is used in the subsequently proposed algorithm. 
The previous theorem is valid for the semi-infinite problem~\eqref{eq.InfOptProb}. In the following theorem, we extend the stability-certificate to discrete sampling.
}

{
\begin{thm}[Discrete stability certificate] \label{thm.Stability}
Given an initial, exponentially stabilizing parameterization $\vK_{t,0}$ for the detectable system $G(\vK_t,s)$ with the set of poles $\cS_H$. Furthermore, using an iterative solution algorithm which minimizes the value of the cost function of the semi-infinite problem~\eqref{eq.FiniteOptProb} in each iteration, thereby producing the parameter vector $\vK_{t,k}$ in the k-th iteration.
Then there exists a sufficiently small (vector) step size $\vDK$,  $\vDK \ge |\vK_{t,k} - \vK_{t,k-1}|$ and a sufficiently big (dense) discrete set $\Omega$ such that the (local) optimum of~\eqref{eq.InfOptProb} exponentially stabilizes $G$. 
\end{thm}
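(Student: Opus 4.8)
The plan is to reuse the architecture of the proof of Theorem~\ref{thm.StabilityInfinite} and augment it with a density requirement on $\Omega$. As in the semi-infinite case, the argument rests on the fact that the iterative solver never increases the cost, so I would again show that a pole cannot cross the imaginary axis without forcing the cost to grow in some intermediate iteration. The decisive difference is that the solver now controls only the sampled cost $\max_{\omega_\mu \in \Omega} \bigSigma(G(\vK_t, j\omega_\mu))$ rather than the true $\Hinf$ norm $\max_{\omega \in \R} \bigSigma(G(\vK_t, j\omega))$. Hence the barrier property~\eqref{eq.varphidef}, which in Theorem~\ref{thm.StabilityInfinite} was stated for the continuous supremum, must be re-established for the sampled maximum.

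First I would exploit that the box $\vK_{tmin} \le \vK_t \le \vK_{tmax}$ is compact and that, by Lemma~\ref{lem.ContPoles}, every pole $s_p(\vK_t) \in \cS_H$ depends continuously on $\vK_t$, so that the imaginary parts of all poles that can approach the axis lie in a bounded interval $I \subset \R$. For each such pole I keep the radius $\varphi_{s_p} > 0$ of~\eqref{eq.varphidef}: whenever $-\varphi_{s_p} \le \myRe(s_p(\vK_t)) < 0$, the continuous value $\bigSigma(G(\vK_t, s))$ on the axis near $\text{Im}(s_p)$ already exceeds the previous iterate's $\Hinf$ norm (Lemma~\ref{lem.PoleLimitSV} gives the blow-up, and $\bigSigma$ is continuous in $s$). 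The core new step is to translate this continuous barrier into a sampled one. Fixing $\varphi_{s_p}$ first pins down a strictly positive width of the resonant peak of $\omega \mapsto \bigSigma(G(\vK_t, j\omega))$ at the worst admissible location $\myRe(s_p) = -\varphi_{s_p}$; by continuity of $\bigSigma$ in both $s$ and $\vK_t$ together with compactness of the box, this width is bounded below uniformly in $\vK_t$. I would then choose $\Omega$ to cover $I$ with a spacing strictly smaller than this width, which guarantees that at least one sample $\omega_\mu$ lands inside the peak whenever a pole reaches real part $-\varphi_{s_p}$. Therefore $\max_{\omega_\mu \in \Omega} \bigSigma(G(\vK_t, j\omega_\mu))$ exceeds the previous sampled cost precisely when a pole enters the barrier region, reproducing~\eqref{eq.varphidef} for the discrete objective.

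With the sampled barrier in hand, the remainder mirrors Theorem~\ref{thm.StabilityInfinite}. Since $s_p(\vK_t)$ is continuous (Lemma~\ref{lem.ContPoles}), a step size $\vDK$ as in~\eqref{eq.vDKDef} exists such that $\vDK \ge |\vK_{t,k} - \vK_{t,k-1}|$ forbids a pole from jumping across the barrier region of width $\varphi_{s_p}$ in one iteration. A crossing from the stable to the unstable half-plane would then require an intermediate iterate with $-\varphi_{s_p} \le \myRe(s_p(\vK_{t,k})) < 0$, which by the sampled barrier strictly increases $\max_{\omega_\mu \in \Omega}\bigSigma(G(\vK_{t,k}, j\omega_\mu))$ --- contradicting the non-increasing cost enforced by the solver. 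Hence the optimum of the sampled problem remains exponentially stable.

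The main obstacle I anticipate is exactly this interplay between sampling density and peak sharpness: as a pole tends to the axis the peak of $\bigSigma(G(\vK_t, j\omega))$ becomes arbitrarily tall and narrow, so a naive fixed grid could slip between samples and miss it. The resolution is the order of quantifiers --- fix $\varphi_{s_p}$ before choosing $\Omega$, so the relevant peak has a definite, uniformly positive width at the critical distance $\varphi_{s_p}$ --- together with the uniformity afforded by compactness of the parameter box. Making the uniform peak-width claim fully rigorous, i.e. a quantitative lower bound on the peak width in terms of $\varphi_{s_p}$ that holds uniformly over the box, is the step that requires the most care.
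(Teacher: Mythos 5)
Your proposal is correct and follows essentially the same route as the paper, whose own proof is only an outline: like the paper, you reuse the barrier radius $\varphi_{s_p}$ from~\eqref{eq.varphidef}, the pole-continuity of Lemma~\ref{lem.ContPoles}, and the step-size bound~\eqref{eq.vDKDef} from Theorem~\ref{thm.StabilityInfinite}, adding only a grid-density requirement near the relevant resonant frequencies --- the paper's counterpart of your uniform peak-width condition is that the sample spacing around relevant poles be smaller than $2\varphi_{s_p}$. Your compactness-based uniformity argument, and your explicit flagging of the peak-width step as the delicate one, is if anything more careful than the paper's sketch.
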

\begin{proof}
	We only outline the proof due to page limitations. The basic idea is that if the density of sampling frequencies $\omega_\mu \in \Omega$ in the k-th iteration is smaller than $2\varphi_{s_p}$, defined with~\eqref{eq.varphidef}, in the frequency regions around relevant poles, then $\max_{\omega_\mu \in \Omega} \bigSigma(G(\vK_{t,k}),j\omega_\mu)$ will increase when a pole approaches the imaginary axis, preventing the pole to become unstable. The detailed proof can be obtained analogously as for Theorem~\ref{thm.StabilityInfinite}.
\end{proof}
}

A direct consequence of the previous theorems is that Problem~\eqref{eq.FiniteOptProb} cannot stabilize an unstable system. If the initial parameterization $\vK_{t0}$ is unstable, Problem~\eqref{eq.FiniteOptProb} will not allow unstable system poles to cross the imaginary axis to the stable region. Note that the requirement for an initial stabilizing controller is in accordance to the results presented in~\cite{apkarian2018structured,kammer2017decentralized}.
In comparison to~\cite{boyd2016mimo}, the stability guarantee in Theorem~\ref{thm.Stability} is applicable to systems which are open-loop unstable, whereas~\cite{boyd2016mimo} requires that the open-loop system is stable.
For better understanding of the claim of Theorem~\ref{thm.Stability}, a small example system is visualized in~\ref{App.ExampleSystem}.


Problem~\eqref{eq.FiniteOptProb} is non-convex due to the nonlinear dependency on the controller parameters in $G(\vK_t,s)$. In order to solve it with convex solvers, we transform the problem into a series of convex optimization problems by linearizing the parameter dependency of $G(\vK_t,s)$. To obtain the linearized transfer matrix in the k-th iteration $G_{L,k}(\vK_t,s)$, we linearize $G(\vK_t,s)$ around the parameter vector obtained in the previous iteration $\vK_{t,k-1}$. The following optimization problem is then solved in each iteration
\begin{subequations} \label{eq.ConvexOptProb}
	\begin{align}
	\min_{\gamma, \vK_{t,k}}  & \quad \gamma \\
	\text{s.t.} \quad & \Phi(G_{L,k}(\vK_{t,k}, j \omega_\mu),\gamma) \succ 0, \forall \omega_\mu \in \Omega\\
	& \vK_{tmin} \leq \vK_{t,k} \leq \vK_{tmax}\\
	& | \vK_{t,k} - \vK_{t,k-1}| \leq \vDK, \label{eq.TrustRegion}
	\end{align}
\end{subequations}
where $\Phi$ is defined in~\eqref{eq.FiniteOptProb}, and
we define the absolute value element-wise for vectors. 
Constraint~\eqref{eq.TrustRegion} has two purposes in the optimization algorithm. First, it defines a trust region in which the linearization accuracy in $G_{L,k}(\vK_{t,k})$ is preserved.
Secondly, by reducing $\vDK$, it can be used to reduce the step size if we obtain an unstable system during optimization.
The resulting iterative convex optimization algorithm is outlined in Fig.~\ref{fig.OptAlg}. 
{In Step~\ref{st.TRAdapt}, $\vDK$ is reduced through multiplication with $\alpha \in (0,1)$, which serves multiple purposes. Firstly, it is used to satisfy the assumptions of Theorem~\ref{thm.Stability}. It ensures that the rise of the system $\Hinf$ norm, due to the approach of a pole, is captured in the optimization. Secondly, it is used to reduce the ``trust region'' around the linearization point of the parameter dependency if it is not accurate enough, i.e. if $\left\| G(\vK_{t,k},s) \right\|_\infty$ increased, although $\left\| G_{L,k}(\vK_{t,k},s) \right\|_\infty$ was reduced.
In subsequent numerical evaluations the value $\alpha = 0.7$ is used. 
Step~\ref{st.SamplingAdapt} adapts the frequency grid $\Omega$ if it is not sufficiently dense, such that Theorem~\ref{thm.Stability} is applicable. To this end, frequencies of unstable poles are added to $\Omega$.}
By choosing $\vDK$ small enough and with sufficient sampling, Theorem~\ref{thm.Stability} guarantees that a stabilizing controller is obtained. Convergence to a local optimum is guaranteed if the initial value is close enough to the (locally) optimal value~\cite{nocedal2006numerical}. Note that the proposed optimization is applicable to arbitrary systems which satisfy the previous assumptions, and not only to electrical networks. {However, a large frequency grid $\Omega$ should be avoided to reduce computation times.}

{
Problem~\eqref{eq.ConvexOptProb} can be extended to optimize the controller parameters to several system realizations $\kappa$ by solving the following optimization problem
\begin{subequations} \label{eq.FiniteOptProbMultiScenario}
	\begin{align}
	\min_{\gamma, \vK_t}  & \quad \gamma \\
	\text{s.t.} \: \: & \Phi(G_\kappa(\vK_t, j\omega_\mu), \gamma) \succ 0, \: \forall \omega_\mu \in \Omega_\kappa, \: \kappa  = 1...N_s\\
	& \vK_{tmin} \leq \vK_t \leq \vK_{tmax},
	\end{align}
\end{subequations}
where $N_s$ is the total number of systems. Such formulations are necessary, e.g. when one parameterization is needed for several system realizations. Problem~\eqref{eq.FiniteOptProbMultiScenario} can be solved analogously with the algorithm in Fig.~\ref{fig.OptAlg} and is used subsequently in a numerical example.
}
\begin{figure}[tb]
	\begin{algorithmic}[1]
		\Procedure{StructHinfTuning}{$G$, $\vK_{t,0}$, $\vDK$, $k_{max}$}
		\State $k=1$, choose $0< \alpha < 1$
		\While{ $k\le k_{max}$ or not converged}
		\State $G_{L,k}(\vK_t)\gets$  linearize $G(\vK_t)$ around $\vK_{t,k-1}$
		\State $\vK_{t,k} \gets $ solution of~\eqref{eq.ConvexOptProb}.
		\If {$\|G(\vK_{t,k},s)\|_\infty \ge \|G(\vK_{t,k-1},s)\|_\infty$ or $G(\vK_{t,k},s)$ is unstable}
		\State $\vDK \gets \vDK \times \alpha$ \label{st.TRAdapt}
		\State Increase the frequency sampling if necessary. \label{st.SamplingAdapt}
		\State $\vK_{t,k} \gets \vK_{t,k-1}$
		\EndIf 
		\State $k \gets k+1$
		\EndWhile
		\EndProcedure
	\end{algorithmic}
	\caption{Proposed iterative parameter optimization algorithm.} 
	\label{fig.OptAlg}
\end{figure}

\section{Simulation Studies}
\label{sec.NumericalEval}

We evaluate the proposed method considering two power system models with 10 and 53 power plants, respectively. For the optimization, we use the Matlab toolbox YALMIP~\cite{Yalmip}, together with the solver SeDuMi~\cite{Sedumi}. We validate the optimization results with nonlinear simulation in the commercial power system simulation software Simscape Power Systems and PSS$^\circledR$Sincal, to obtain a practically relevant evaluation.



\subsection{The IEEE 39 bus 10 generator model}
\label{subsec.IEEE39}

The first example is a dynamic model of the IEEE 39 bus, and 10 power plant system, which is adopted from~\cite{moeini2015open}. The topology of the power system is shown in Fig.~\ref{fig.IEEE39}. It consists of 10 power plants whose structure is described in detail in~\ref{App.IEEE39Models}. The power system contains static prosumers, denoted with arrows, c.f. Fig.~\ref{fig.IEEE39}. We consider the active powers of constant-power elements in buses 2, 4, 9, 21, 23, 26, 29 as disturbance inputs, marked with blue in Fig.~\ref{fig.IEEE39}.
All prosumer and grid parameters are taken from~\cite{moeini2015open}. We increased the exciter gains from 200 to 800 to obtain a stable system. The tunable controller parameters of all power plant controllers are marked red in Figs.~\ref{fig.Exciter}, \ref{fig.PSS}, and~\ref{fig.TGOV}. The overall linear system consists of 190 states and 100 tunable controller parameters, c.f.~\cite{Mesanovic18ACC,Mesanovic17ISGT}. {The initial maximal allowed step size for each controller parameter of each generator is shown in Table~\ref{tab.DKIEEE39} in~\ref{App.IEEE39Models}.}

\begin{figure}[t]
	\centering
	\includegraphics[width=1.0\columnwidth]{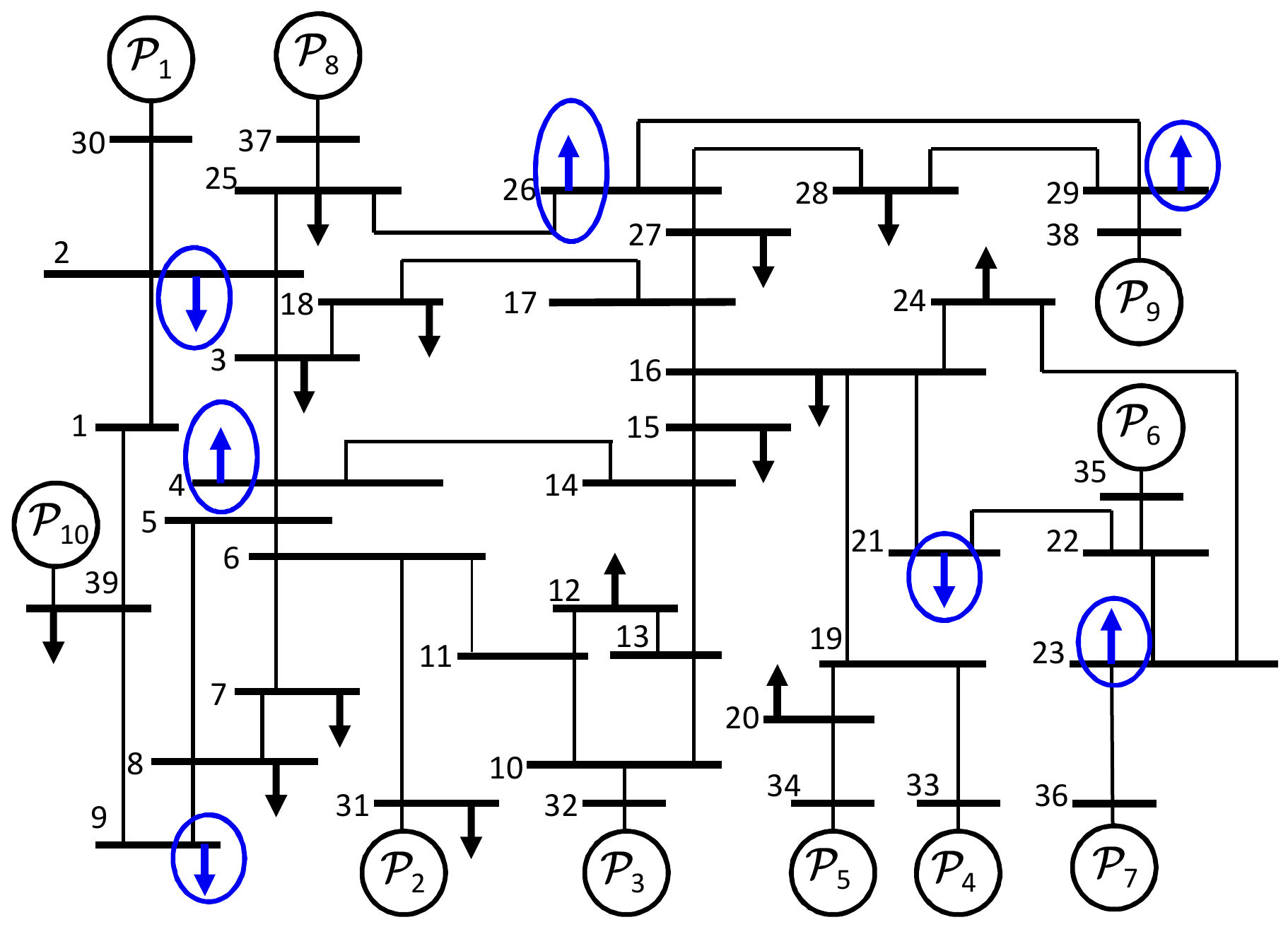}
	\caption{A IEEE 39 bus system with 10 dynamic power plant prosumers~\cite{Mesanovic18ACC,Mesanovic17ISGT}. Blue arrows denote the disturbances $w_i$.}
	\label{fig.IEEE39}
\end{figure}

\begin{figure}[tb]
	\centering
	\includegraphics[width=1.0\columnwidth]{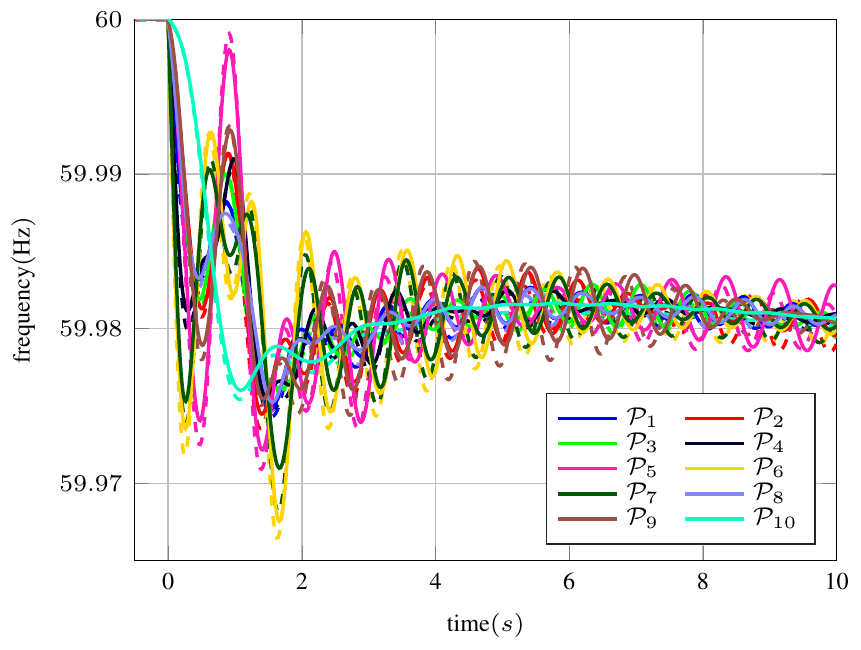}
	\caption{Initial parameters: frequency response after a 100 MW load step in bus 21. Solid lines represent simulations with the nonlinear model, whereas dashed lines represent simulations with the linear model.}
	\label{fig.IEEE39initF}
\end{figure}

\begin{figure}[tb]
	\centering
	\includegraphics[width=1.0\columnwidth]{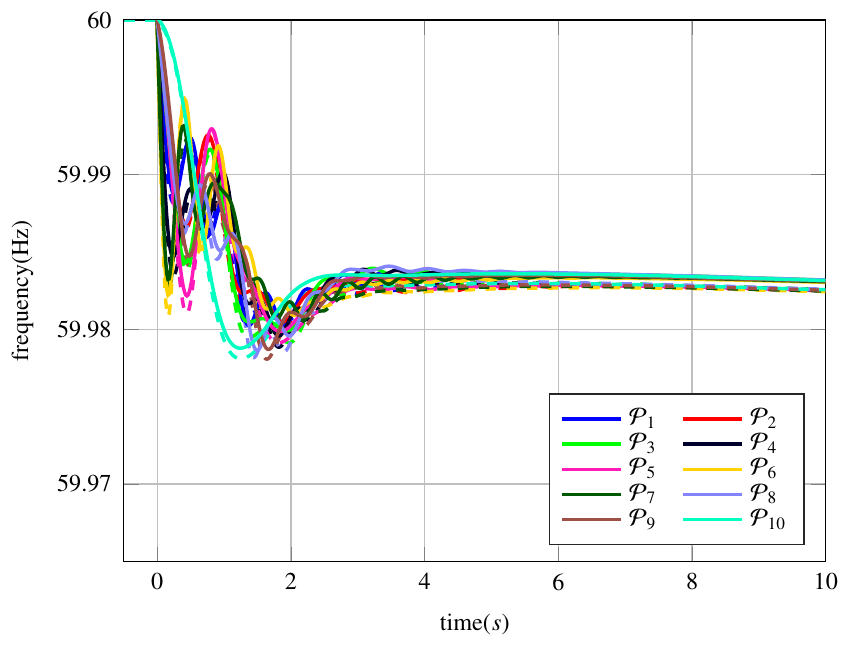}
	\caption{$\Hinf$ tuned parameters: frequency response after a 100 MW load step in bus 21. Solid lines represent simulations with the nonlinear model, whereas dashed lines represent simulations with the linear model.}
	\label{fig.IEEE39optF}
\end{figure}
\begin{figure}[tb]
	\centering
	\includegraphics[width=1.0\columnwidth]{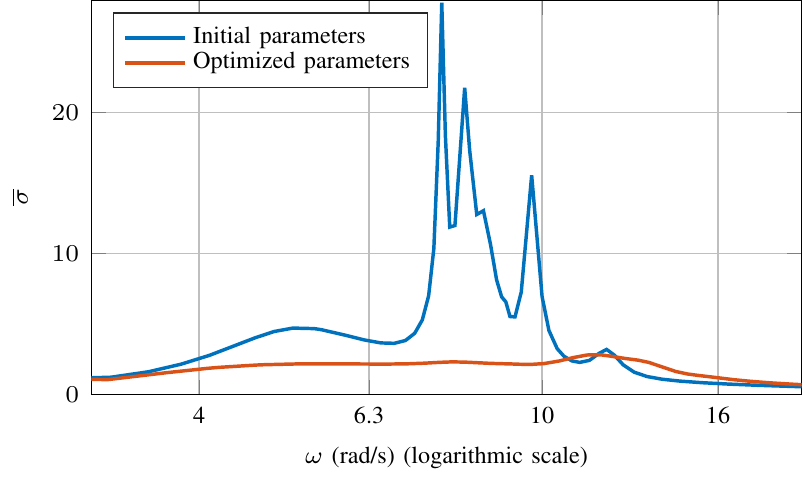}
	\caption{Largest singular value of the linearized IEEE 39 bus power system as a function of frequency $\omega$. After optimization, most of the resonant peaks in the system are eliminated.}
	\label{fig.IEEE39Sigmas}
\end{figure}

Figure~\ref{fig.IEEE39initF} shows the linear (dashed lines) and nonlinear (solid lines) simulation of the generator frequencies. The nonlinear simulation is performed in Simscape Power Systems~\cite{SimPowSys} with nonlinear models of the power plants and the nonlinear power flow. It shows poorly dampened oscillations in the system. Thereby, $\cP_{10}$ emulates a connected power system, and thus has a much larger inertia than other power plants. Consequently, its behavior in the time response in Fig.~\ref{fig.IEEE39initF} is different than the response of the other power plants.
The difference between the linear and nonlinear responses in Fig.~\ref{fig.IEEE39initF} is small and the linear model can be utilized for the optimization.

Figure~\ref{fig.IEEE39optF} shows the time-domain response using the proposed tuning algorithm, which is significantly improved.
Simulations with the optimized parameters of the linear model (dashed lines) again shows good correspondence to the detailed nonlinear simulation (solid lines). The structured controller synthesis reduced the $\Hinf$ norm by a factor of 10. Thus, the optimally tuned parameters reject disturbances significantly better than in the untuned case. The largest singular value of the system, presented in Fig.~\ref{fig.IEEE39Sigmas} as a function of the input frequency, shows that the resonant peaks were practically eliminated after the parameter optimization.

{
To further demonstrate the efficiency of the proposed approach, we optimize controller parameters considering independent (one-by-one) failures of all dynamic prosumers in the system, as well as the independent failures of power lines (PLs) between buses 2 and 3, 5 and 6, 6 and 11, 21 and 22, 23 and 24, as well as 28 and 29, denoted with PL 1-6, respectively. Such failures excite simultaneous changes of active- and reactive power. Furthermore, as failures occur almost instantaneously in the system, one controller parameterization is needed for all considered failures as there is not enough time to reparameterize the system after a failure.
} 
{
For this purpose, we apply the approach from~\cite{mesanovic2018ISGT} to convert such failures into equivalent static prosumers. The obtained model, however, is only valid for one considered failure. Hence, we obtain 16 independent systems, for which we find one controller parameterization by using~\eqref{eq.FiniteOptProbMultiScenario}.
Figure~\ref{fig.IEEE39_f_allgens_init} shows the results of 10 simulations using the nonlinear model for the initial parameters.
Each shaded area shows the frequency response of the system after the dropout of one prosumer. Thereby, the bounds of the frequencies of the remaining 9 dynamic prosumers is shown. 
It shows prevailing oscillations in the system after the failure of almost any prosumer. Figure~\ref{fig.IEEE39_f_allgens_opt} shows 10 simulations after the independent dropout of each dynamic prosumer with the nonlinear model for optimized parameters, demonstrating a significantly improved response, with a reduced overshoot and settling time in all cases.
}
{
Analogously, Fig.~\ref{fig.IEEE39_f_PLfailure_init} shows the result of six nonlinear simulations with initial parameters. Thereby, each shaded area shows the response of all 10 dynamic prosumers to the considered PL failure. Using the proposed approach, the response of the system to such failures is significantly improved, c.f. Fig.~\ref{fig.IEEE39_f_PLfailure_opt}.
}

\begin{figure}[tb]
	\centering
	\includegraphics[width=1.0\columnwidth]{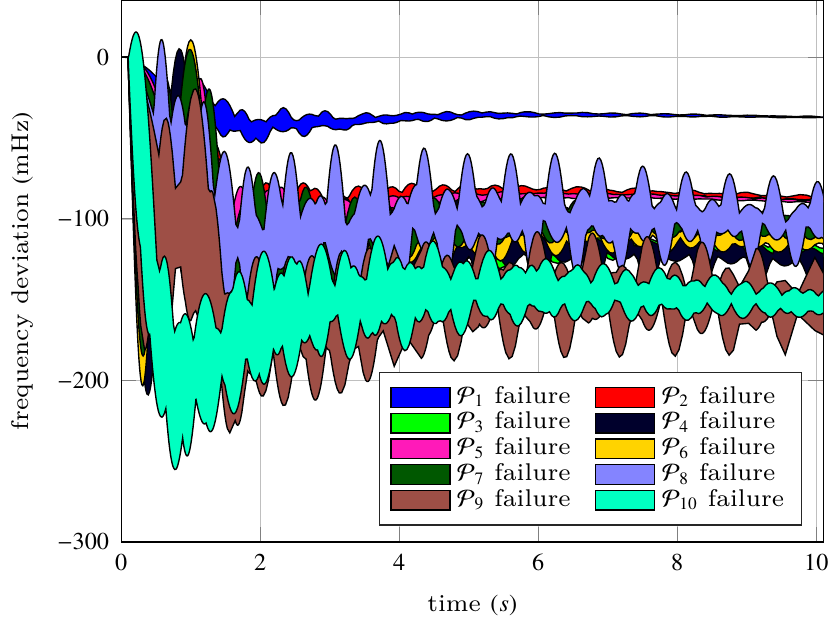}
	\caption{{Bounds of nonlinear frequency responses after dynamic prosumer failures with the initial parameterization. Each area shows the response of the remaining 9 dynamic prosumers. The figure shows ten independent nonlinear simulations.}}
	\label{fig.IEEE39_f_allgens_init}
\end{figure}

\begin{figure}[tb]
	\centering
	\includegraphics[width=1.0\columnwidth]{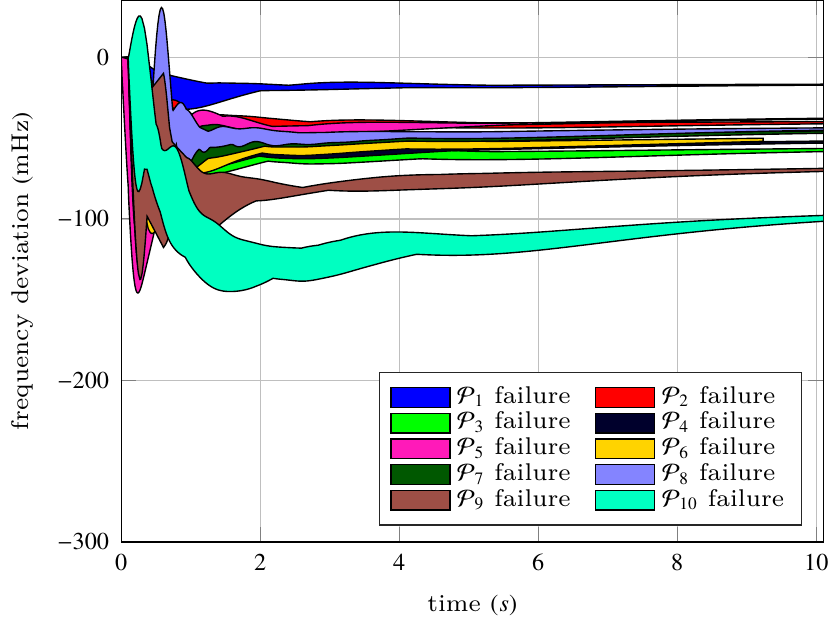}
	\caption{{Bounds of nonlinear frequency responses after dynamic prosumer failures with the optimized parameterization. Each area shows the response of the remaining 9 dynamic prosumers. The figure shows ten independent nonlinear simulations.}}
	\label{fig.IEEE39_f_allgens_opt}
\end{figure}

\begin{figure}[tb]
	\centering
	\includegraphics[width=1.0\columnwidth]{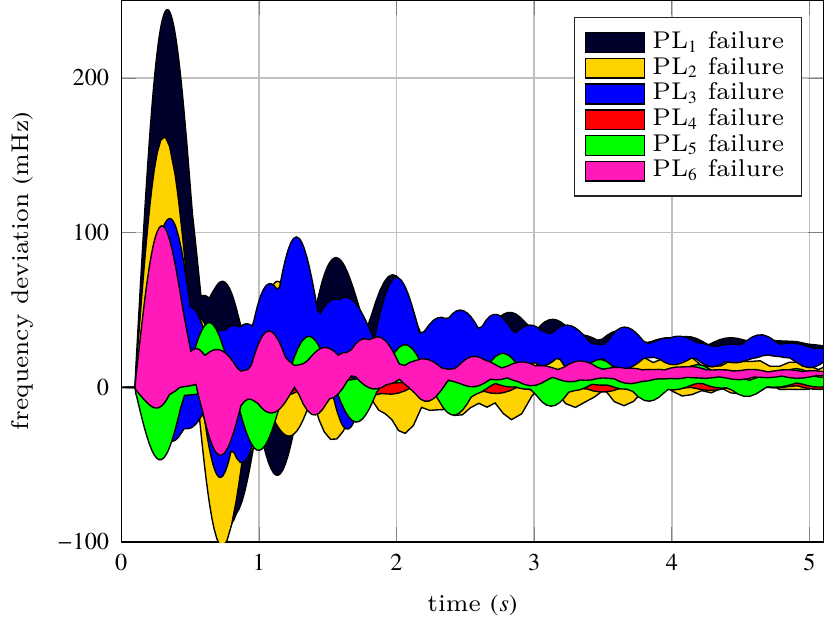}
	\caption{{Bounds of frequency responses after power line failures with the initial parameterization. Each area shows the response of all 10 dynamic prosumers. The figure shows six independent nonlinear simulations.}}
	\label{fig.IEEE39_f_PLfailure_init}
\end{figure}

\begin{figure}[tb]
	\centering
	\includegraphics[width=1.0\columnwidth]{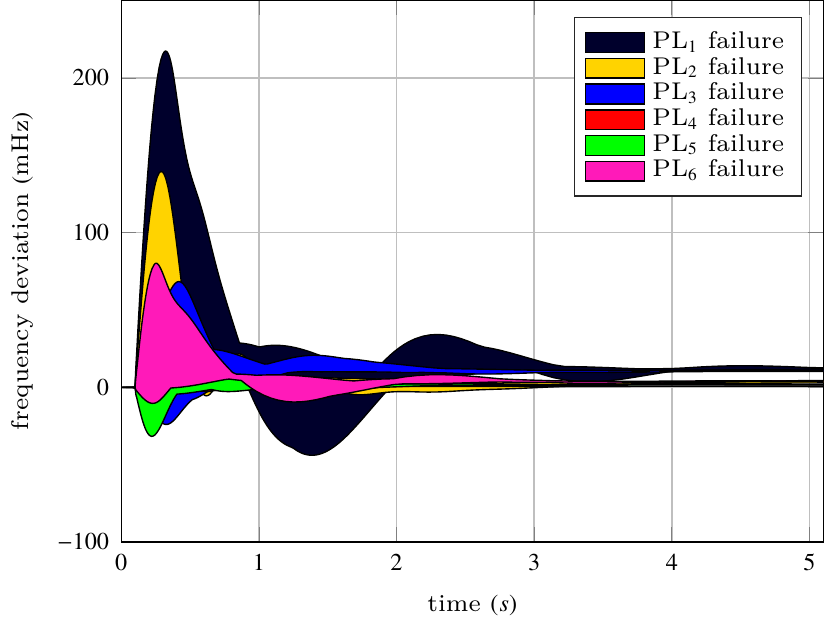}
	\caption{{Bounds of frequency responses after power line failures with the initial parameterization. Each area shows the response of all 10 dynamic prosumers. The figure shows six independent nonlinear simulations.}}
	\label{fig.IEEE39_f_PLfailure_opt}
\end{figure}

\subsection{European 53 generator model}
\label{subsec.DynaGrid}
\begin{figure}[t]
	\centering
	\includegraphics[width=1\columnwidth]{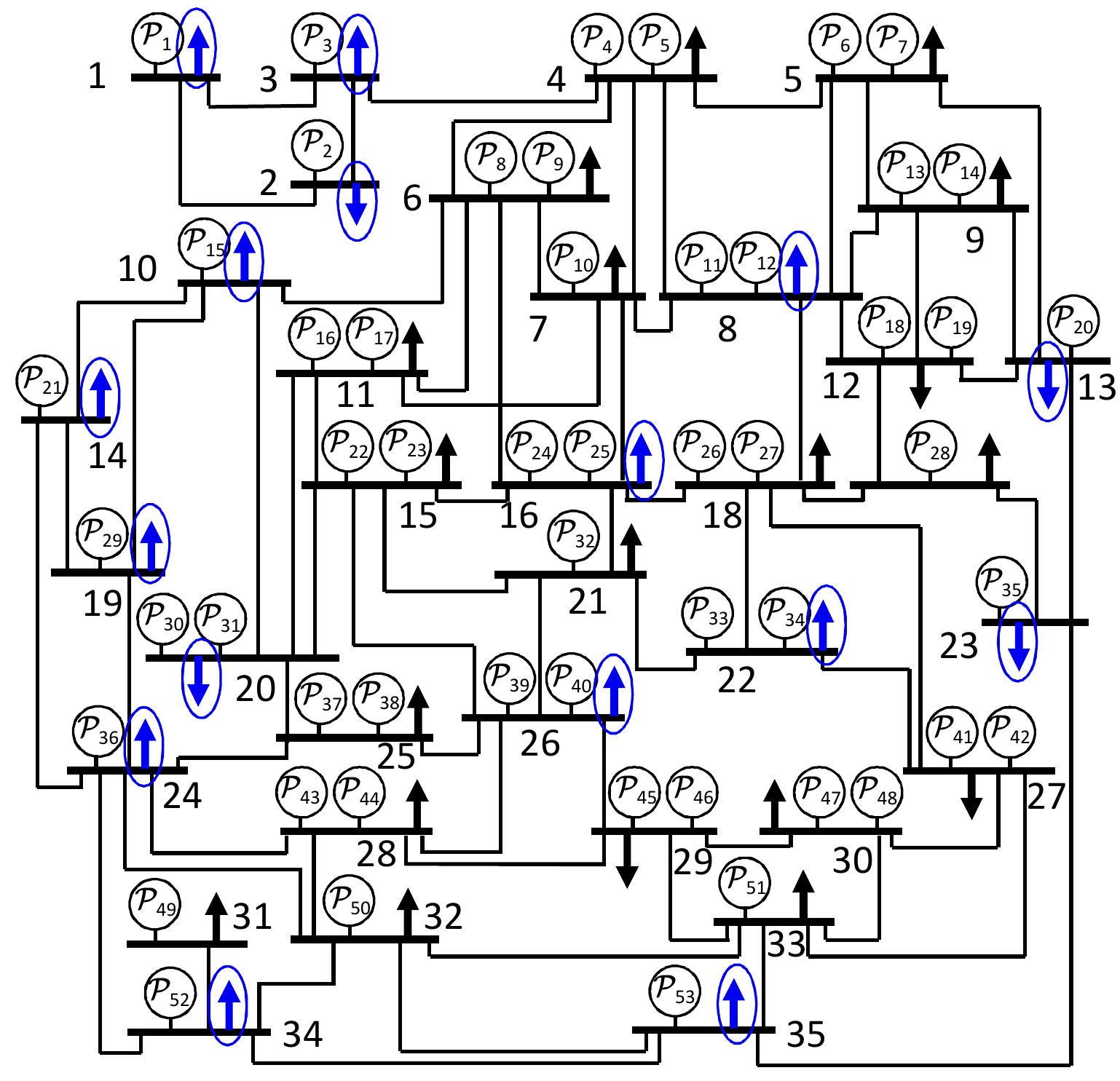}
	\caption{The 53 power plant power system developed in the DynaGridCenter project~\cite{dynagrid}. 
		Buses with uncertain infeeds, denoting disturbances $w_i$, are marked blue.}
	\label{fig.DynaGrid}
\end{figure}

For the second example, we use a model with 53 power plants. It represents a reduced version of the European power system, developed as a part of the research project DynaGridCenter~\cite{dynagrid}. An overview of the power system structure is shown in Fig.~\ref{fig.DynaGrid}. The grid consists of 35 buses (nodes), connected by long power lines. 
The controllers used for this model are presented in~\ref{App.DynaGridModels}. 
A more detailed description of the considered system is avoided as it is not necessary for the understanding of the presented results.
Nineteen power plants in the system have controllers, whereas all other power plants have a constant exciter voltage $E_{fd,i}$ and turbine mechanical power $P_{m,i}$. The described power system has a total of 469 states and 116 controller parameters. We consider the active powers of static prosumers in 15 buses as disturbance inputs, marked with blue in Fig.~\ref{fig.DynaGrid}.

\begin{figure}[tb]
	\centering
	\includegraphics[width=1.0\columnwidth]{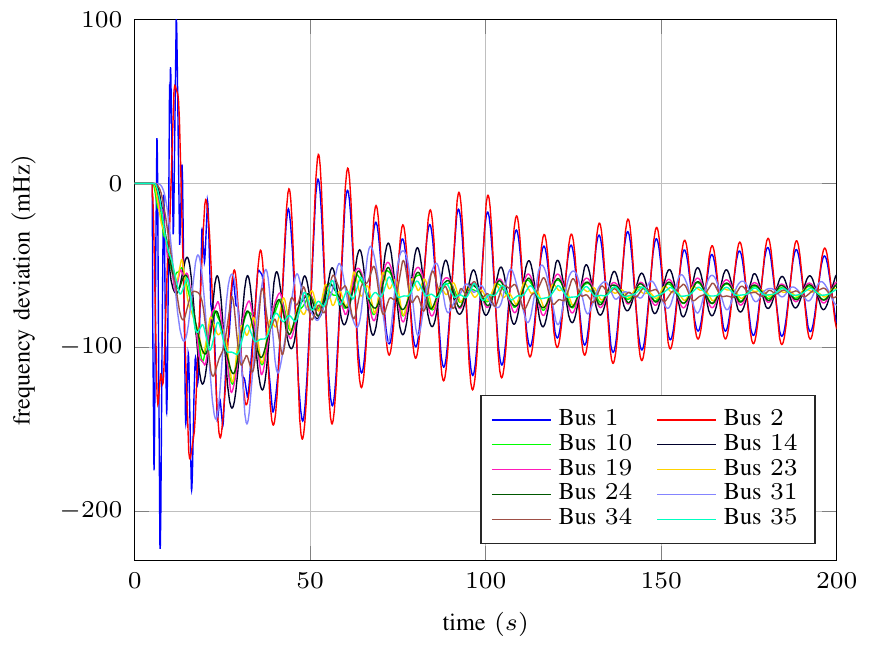}
	\caption{Frequency response in several buses in the system after a 1.5 GW generator outage in bus 1 with initial parameters.}
	\label{fig.DynaGridInitFreq}
\end{figure}
\begin{figure}[tb]
	\centering
	\includegraphics[width=1.0\columnwidth]{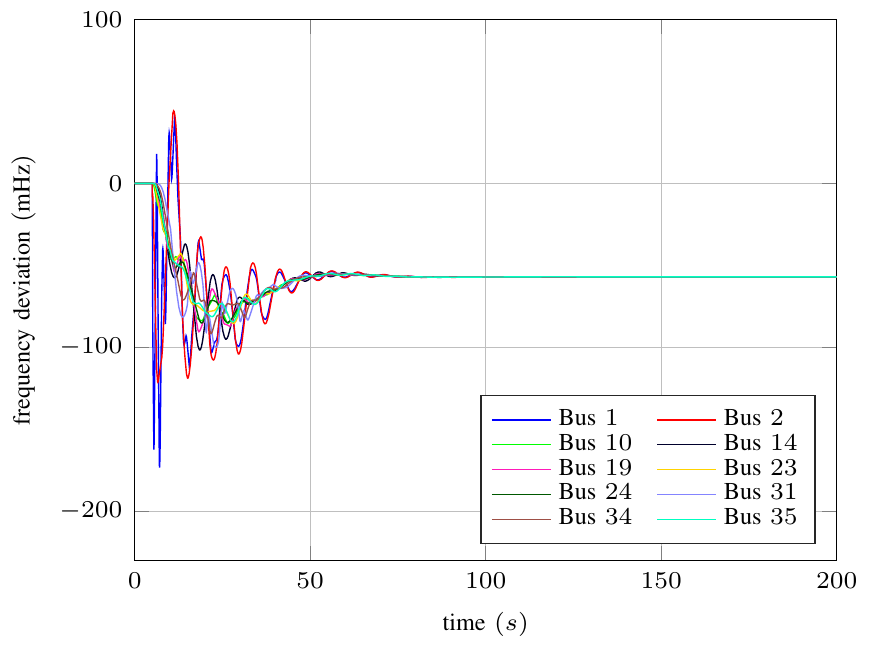}
	\caption{Power plant frequency response after a 1.5 GW load step in bus 1 with $\Hinf$ tuned parameters.}
	\label{fig.DynaGridOptFreq}
\end{figure}

Figure~\ref{fig.DynaGridInitFreq} shows the frequency response at ten nodes in the system with initial system parameters after a 1.5 GW generation dropout in bus 1. The simulation is done with nonlinear power plant and power grid models in the commercial power system simulation software PSS{$^\circledR$}Sincal. As shown in Fig.~\ref{fig.DynaGridInitFreq}, poorly dampened oscillations are present in the system. The initial parameters were obtained manually with iterative simulation. Due to the system complexity and time limitations, we did not find a better parameterization manually. The proposed tuning algorithm provides a systematic way to tune the parameters of such complex systems.

Figure~\ref{fig.DynaGridOptFreq} shows the frequency response after the application of the $\Hinf$ tuning algorithm. It shows a reduction in the overshoot, as well as significantly improved oscillation damping, confirmed by the singular value plot in Fig.~\ref{fig.DynaGridSigmas}. The system $\Hinf$ norm was reduced by a factor of 5.4, and thereby most of the resonant peaks were practically eliminated; c.f. Fig.~\ref{fig.DynaGridSigmas}, even though the parameters of only 19 power plants, from a total of 53, were optimized.

\begin{figure}[tb]
	\centering
	\includegraphics[width=1.0\columnwidth]{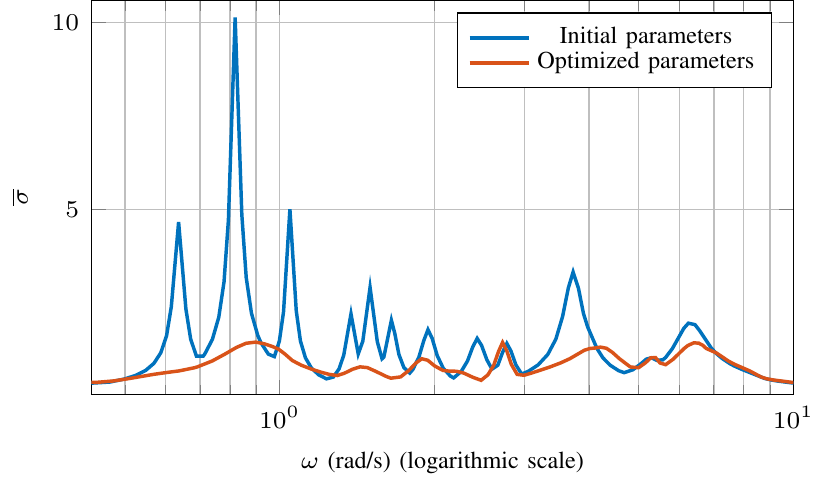}
	\caption{Largest singular value of the linearized reduced European power system model as a function of frequency $\omega$. After optimization, most of the resonant peaks in the system are practically eliminated.}
	\label{fig.DynaGridSigmas}
\end{figure}

\subsection{Discussion}
\label{subsec.Discussion}
We performed simulation studies on two power system models exploiting linearized models for the tuning and using nonlinear simulation environments for verification. The $\Hinf$ tuning algorithm reduced the $\Hinf$ norm of the systems to 0.01\% and 19\% of the initial norm with static prosumers as disturbances, thereby significantly reducing the time-domain settling time and overshoot of those systems, as summarized in Table~\ref{tab.NumEvalSummary}. The presented approach provides a systematic solution and shows very good results for parameter tuning in these complex systems. The outcomes of the optimization are also validated in commercial power system simulation software with detailed nonlinear component models, showing the applicability of the approach to practical systems.



\begin{table}[tb]
	\centering
	\caption{Comparison of the $\Hinf$ norm, settling time, and overshoot improvement for the two power system simulation studies with static prosumers as disturbances.}
	\label{tab.NumEvalSummary}
	\begin{tabular}{c c c c}
		\toprule       
		relation $\frac{x_\text{opt}}{x_\text{initial}}$ & $\Hinf$ norm & Settling time & Overshoot \\ \midrule
		IEEE 39 bus                                  & 0.1         & 0.72          & 0.66      \\
		European                                   & 0.19         & 0.11          & 0.78      \\ \bottomrule
	\end{tabular}
\end{table}

\section{Experimental validation}
\label{sec.ExpEval}
\begin{figure}[tb]
	\centering
	\includegraphics[width=1\columnwidth]{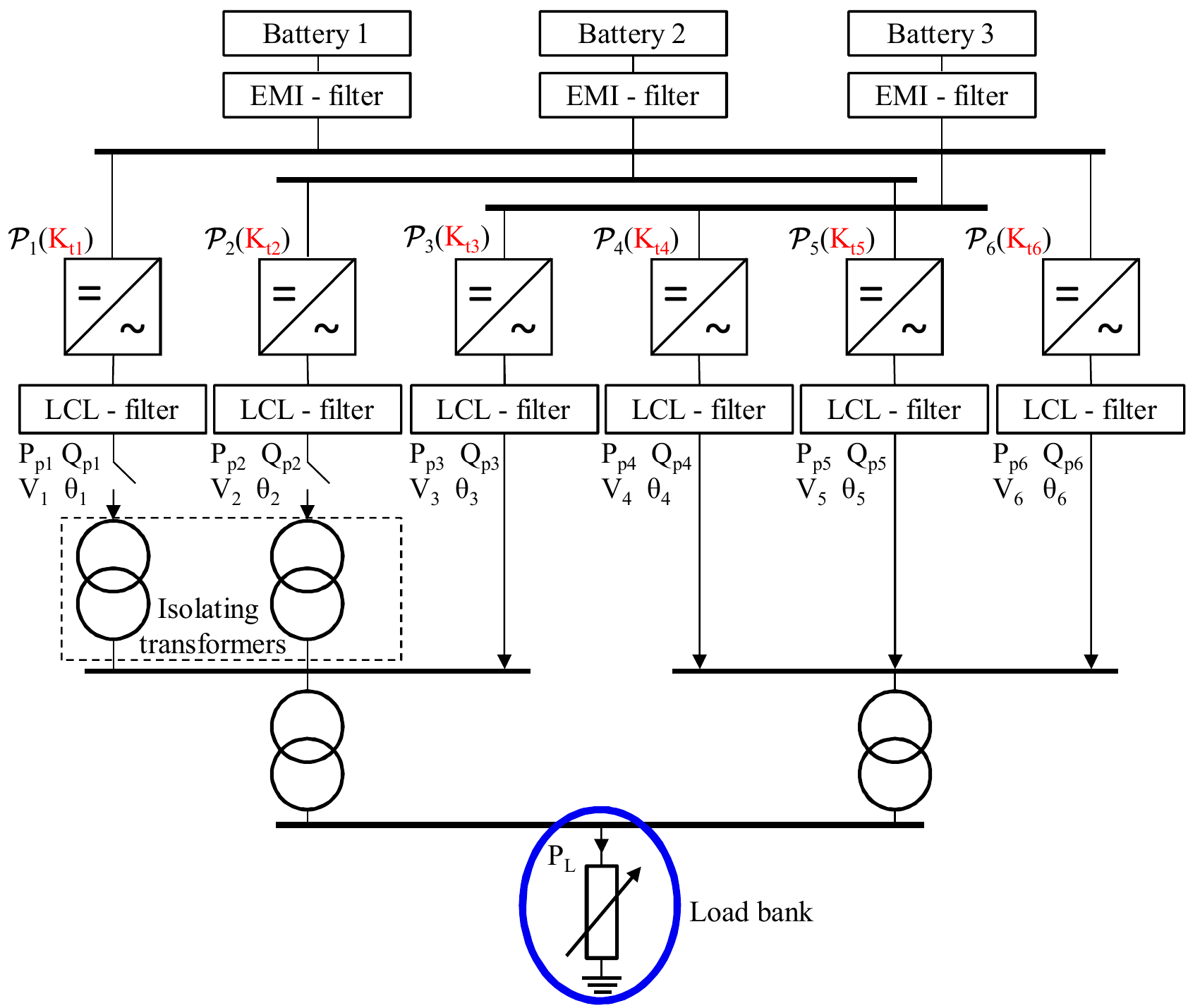}
	\caption{Structure of the considered part of the microgrid, consisting of six parallel connected inverters. Details can be found in~\cite{rahmoun2017mathematical}. The active power of the load bank, denoted with $P_L$, is the disturbance input $w_i$ into the system.}
	\label{fig.IREN2Structure}
\end{figure}
The $\Hinf$ tuning algorithm was furthermore validated on a testbed microgrid in Wildpoldsried, Germany, as a part of a funded research project~\cite{iren2}.

The considered part of the grid consists of six 55 kVA SINAMICS inverters, connected to three Lithium-Ion batteries, and a controllable 150 kW load bank, c.f. Fig.~\ref{fig.IREN2Structure}. The microgrid can operate attached to the supply grid, as well as in islanded operation. Further details can be found in~\cite{iren2,rahmoun2017mathematical}.
We consider the case when the microgrid is running independently of the supply grid.
All inverters are running in grid-forming mode, i.e. they control their voltage magnitude and frequency based on their active and reactive power infeed, as shown in Fig.~\ref{fig.InvModel}. This leads to increased reliability and power quality in the system, because failure of one inverter will not cause a blackout when properly configured.
In order to enable parallel operation of the inverters, droop control of active and reactive power is used, as described in Subsection~\ref{subsec.ProsumerModel}. Droop control is the current state-of-the-art method for control of distributed generations for several reasons: it requires only local measurements and no real-time communication or accurate time synchronization, it enables power sharing and parallel operation of grid-forming inverters etc. {Other control schemes for inverters, beside droop control, are also possible, such as virtual oscillator control~\cite{gross2019effect} or the virtual synchronous generator concept~\cite{chen2019parameter}.}
We perform load steps with the load bank in order to evaluate the system performance.

The presented system is of interest for several reasons:
\begin{itemize}
	\item To avoid circulating currents, two isolating transformers are a part of the system, as shown in Fig.~\ref{fig.IREN2Structure}, which are sources of asymmetry in the load-step response of the inverters. Such asymmetry will also occur if the inverters are geographically distributed within a microgrid. Therefore this configuration is a good test example for a real life setup.
	\item 
	The system does not have any generation with mechanical inertia. Hence, it is an interesting example for a zero-inertia system.
\end{itemize}

%
%

\subsection{Manual tuning of the testbed system}
\label{Sec.IREN2ManualOptimization}

Manual tuning of the system was performed with iterative simulation methods based on the inverter model described in Subsection~\ref{subsec.ProsumerModel}, see Fig.~\ref{fig.InvModel}. The step response in Fig.~\ref{fig.2InvMeasSim} is obtained with the parameters from Table~\ref{tab.2InvInitStab}. It shows good correspondence between measurement (solid lines) and simulation (dashed lines), demonstrating the validity of the used model. 
The difference between measurement and simulation originates from unmodeled loads, other inverter controllers, phase-asymmetries etc.
We show active power plots, because in this system, the oscillations are better visible in the active power than in the frequencies.

\begin{table}[tb]
	\centering
	\caption{Stable manual parameterization of inverters 1 and 6.}
	\label{tab.2InvInitStab}
	\begin{tabular}{c c c c c}
		\toprule       Inv & $K_P$ (\%) & $K_Q$ (\%) & $T_f$ (ms) & $T_v$ (ms) \\ \midrule
		1               & 2         & 3.1       & 100        & 100        \\
		6               & 2         & 3.1       & 100        & 100        \\ \bottomrule
	\end{tabular}
\end{table}

\begin{figure}[tb]
	\centering
	\includegraphics[width=1\columnwidth]{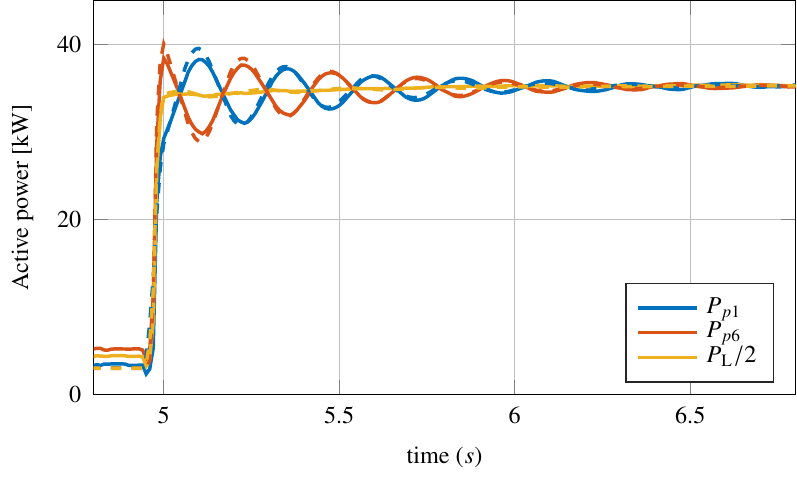}
	\caption{Response to a 60 kW load step with inverters 1 and 6 achieved by manual tuning; $P_{L}$ is calculated as the sum of $P_{p1}$ and $P_{p6}$.   Solid lines represent measurements, whereas dashed lines represent simulations with the nonlinear model.}
	\label{fig.2InvMeasSim}
\end{figure}

The same parameters from Table~\ref{tab.2InvInitStab} are also used for the operation of all six inverters, resulting in the 150 kW load step response shown in Fig.~\ref{fig.6InvStabSimMeas}. A discrepancy is present in the oscillation frequency between measurement and simulation of inverter 3. 
A better match can be obtained by an iterative adaptation of grid parameters, i.e. impedances in the grid. 
We avoid this because mismatches between measurements and simulation are expected in real systems, and as it allows to test the sensitivity of the proposed method to model discrepancies. 
The setup with six inverters was also used for successful operation with household consumers in islanded mode.

\begin{figure}[tb]
	\centering
	\includegraphics[width=1\columnwidth]{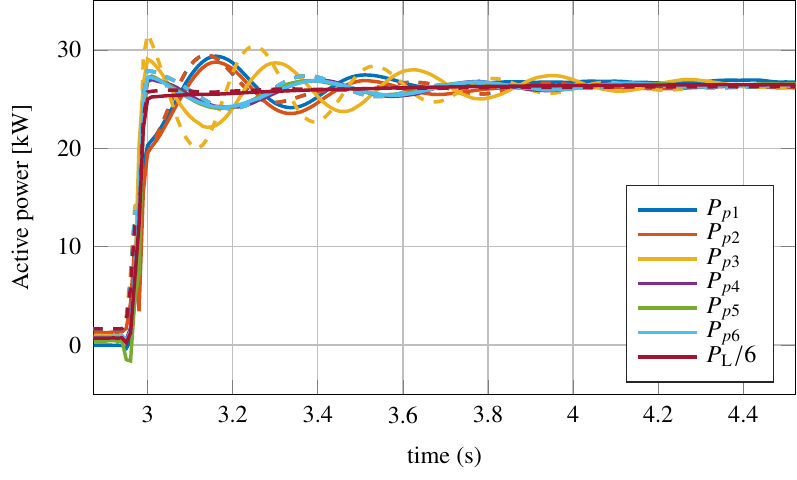}
	\caption{Response to a 150 kW load step with all inverters achieved by manual tuning; $P_{L}$ is calculated as the sum of all inverter powers. Solid lines represent measurements, whereas dashed lines represent simulations with the nonlinear model.}
	\label{fig.6InvStabSimMeas}
\end{figure}

\subsection{Automatic tuning of the testbed system}
\label{Sec.IREN2Optimization}

The results obtained by manual tuning in Figs.~\ref{fig.2InvMeasSim} and~\ref{fig.6InvStabSimMeas} show prevailing oscillations in the system after a load step. Arguably, they are still satisfactory for many applications. However, manual tuning requires expert know-how of the system and is associated with a significant time-effort.
Automatic tuning methods enable the fast design of robust microgrids, without expert knowledge. We apply and experimentally validate the proposed $\Hinf$ tuning method on the testbed system.

\subsubsection{Parameter tuning for inverters 1 and 6}

We first apply the $\Hinf$ parameter tuning algorithm to the system when only inverters 1 and 6 are running.
The response for a 60 kW load step with optimized parameters, c.f. Table~\ref{tab.2invOptAll}, is shown in Fig.~\ref{fig.2InvOptMeas}. The settling time 
of the step response is practically reduced to zero. 
{
Measurements with several load steps are shown in Fig.~\ref{fig.2InvOptAll_loadSteps}, demonstrating that the linear model is valid for a range of loading conditions and disturbances.
}
However, due to different droop values of $K_{P,1}$ and $K_{P,6}$, the steady state power of the inverters is not identical. Such parameterization may cause inverter 6 to overload after a large load step.
Still, if the inverters have sufficient power reserves, and no large sudden load changes are expected, this parameterization provides the best step response with regard to oscillation suppression. The generation of the inverters can be balanced out with slower control schemes, called secondary control, which are a standard part of power system control. As they operate at a slower time scale than the ones observed here, they are beyond the scope of this work.

\begin{figure}[tb]
	\centering
	 \includegraphics[width=1\columnwidth]{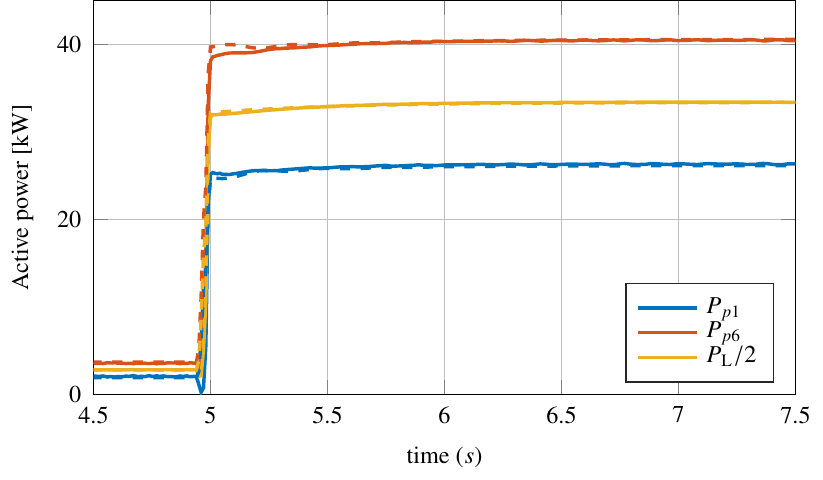}
	\caption{Response to a 60 kW load step with inverters 1 and 6 achieved by optimal tuning of all parameters; $P_{L}$ is calculated as the sum of $P_{p1}$ and $P_{p6}$. The optimized parameters are shown in Table~\ref{tab.2invOptAll}. Solid lines represent measurements, whereas dashed lines represent simulations with the nonlinear model.}
	\label{fig.2InvOptMeas}
\end{figure}
\begin{figure}[tb]
	\centering
	\includegraphics[width=1\columnwidth]{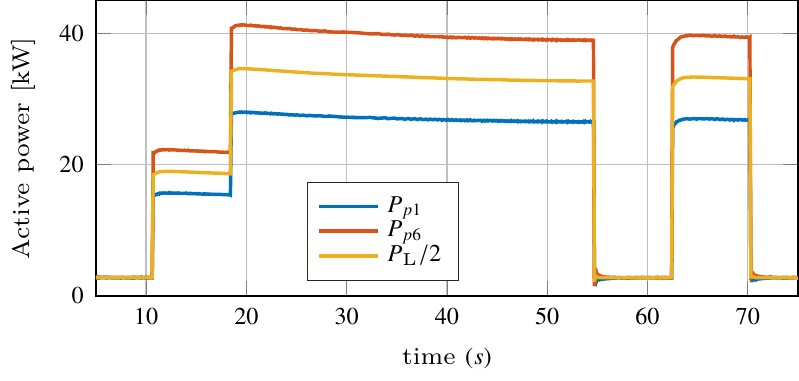}
	\caption{{Response of inverters 1 and 6 to 30 and 60 kW load steps with parameters from Table~\ref{tab.2invOptAll}.}}
	\label{fig.2InvOptAll_loadSteps}
\end{figure}

In order to eliminate the generation imbalance even without secondary control, we introduce additional constraints which enforce the equality of the droop gains, i.e. $K_{P1} = K_{P6}$ and $K_{Q1} = K_{Q6}$. With these constraints, we obtain optimized parameters shown in Table~\ref{tab.2invOptTime}, which achieve the step response shown in Fig.~\ref{fig.2InvOptTimeMeas}. We see that, even with the equality constraint, improvement in the step response of the system is still possible, compared to manual tuning results.


\begin{table}[tb]
	\centering
	\caption{Optimal parameterization of inverters 1 and 6 when all parameters are optimized.}
	\label{tab.2invOptAll}
	\begin{tabular}{c c c c c}
		\toprule       Inv & $K_P$ (\%) & $K_Q$ (\%) & $T_f$ (ms) & $T_v$ (ms) \\ \midrule
		1               & 3.13       & 3.56       & 108        & 104        \\
		6               & 2          & 3.62       & 115        & 104        \\ \bottomrule
	\end{tabular}
\end{table}

\begin{figure}[tb]
	\centering
	\includegraphics[width=1\columnwidth]{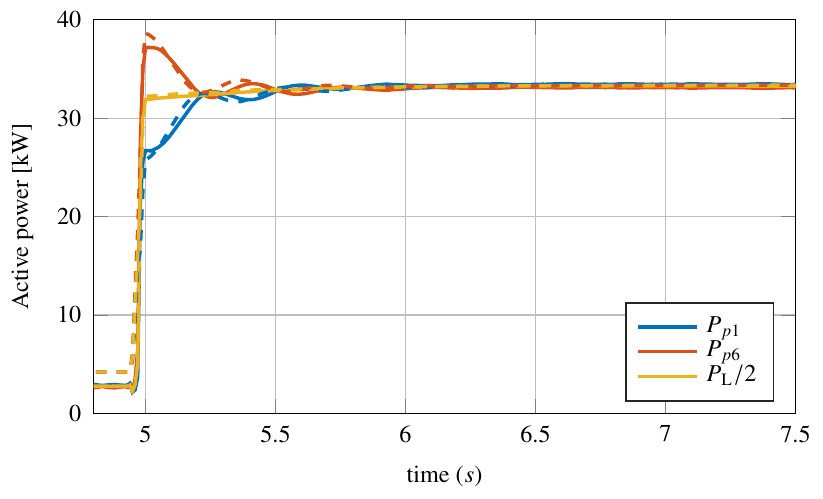}
	\caption{Response to a 60 kW load step with inverters 1 and 6 achieved by optimal tuning together with droop gain equality constraints; $P_{L}$ is calculated as the sum of $P_{p1}$ and $P_{p6}$. The optimized parameters are shown in Table~\ref{tab.2invOptTime}. Solid lines represent measurements, whereas dashed lines represent simulations with the nonlinear model.}
	\label{fig.2InvOptTimeMeas}
\end{figure}

\begin{table}[tb]
	\centering
	\caption{Optimal parameterization of inverters 1 and 6 with droop equality constraints.}
	\label{tab.2invOptTime}
	\begin{tabular}{c c c c c}
		\toprule       Inv & $K_P$ (\%) & $K_Q$ (\%) & $T_f$ (ms) & $T_v$ (ms) \\ \midrule
		1               & 2          & 3.13       & 89         & 100        \\
		6               & 2          & 3.13       & 130        & 100        \\ \bottomrule
	\end{tabular}
\end{table}

\subsubsection{Parameter tuning for all inverters}
\label{Subsec.6inv}

All 6 inverters are operating in parallel in grid-forming mode. 
The 150 kW load step response when all tunable inverter parameters are optimized, is shown in Fig.~\ref{fig.6InvOptAll}. The optimized parameters are shown in Table~\ref{tab.6invOptAll}. In this case, the oscillations could not be completely eliminated because of insufficient freedom in the parameterization. Still, a noticeable improvement is observable compared to manual tuning, c.f. Fig.~\ref{fig.6InvStabSimMeas}. 

To avoid unequal power sharing, equality constraints for the droop gains are introduced, i.e. $K_{P,1} = K_{P,2} = ... = K_{P,6}$ and $K_{Q,1} = K_{Q,2} = ... = K_{Q,6}$.
The step response for this case is shown in Fig.~\ref{fig.6InvOptTime}, and the obtained parameters in Table~\ref{tab.6invOptTime}. The overshoot in this case cannot be avoided. However, the power oscillations after the initial overshoot are reduced when compared to the manual tuning results in Fig.~\ref{fig.6InvStabSimMeas}.


\begin{figure}[tb]
	\centering
	\includegraphics[width=1\columnwidth]{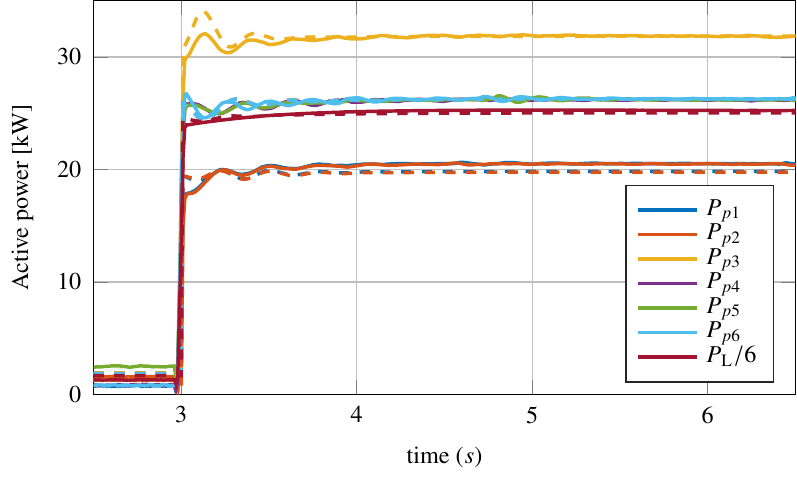}
	\caption{Response to a 150 kW load step with all inverters achieved by optimal tuning of all parameters; $P_{L}$ is calculated as the sum of all inverter powers. The optimized parameters are shown in Table~\ref{tab.6invOptAll}. Solid lines represent measurements, whereas dashed lines represent simulations with the nonlinear model.}
	\label{fig.6InvOptAll}
\end{figure}
\begin{table}[tb]
	\centering
	\caption{Optimal parameterization of inverters 1 and 6 with no droop equality constraints.}
	\label{tab.6invOptAll}
	\begin{tabular}{c c c c c}
		\toprule       Inv & $K_P$ (\%) & $K_Q$ (\%) & $T_f$ (ms) & $T_v$ (ms) \\ \midrule
		1-2               & 3.1        & 3.3        & 107        & 105        \\
		3               & 2          & 3.5        & 124        & 104        \\
		4-6               & 2.1        & 3.6        & 102        & 104        \\ \bottomrule
	\end{tabular}
\end{table}
\begin{figure}[tb]
	\centering
	\includegraphics[width=1\columnwidth]{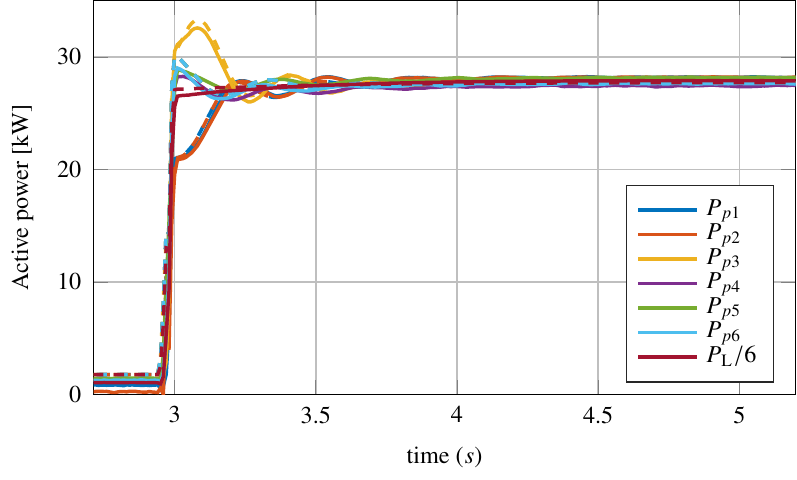}
	\caption{Response to a 150 kW load step with all inverters 1 and 6 achieved by optimal tuning and droop equality constraints; $P_{L}$ is calculated as the sum of all inverter powers. The optimized parameters are shown in Table~\ref{tab.6invOptTime}. Solid lines represent measurements, whereas dashed lines represent simulations with the nonlinear model.}
	\label{fig.6InvOptTime}
\end{figure}
\begin{table}[tb]
	\centering
	\caption{Optimal parameterization of inverters 1 and 6 with droop equality constraints.}
	\label{tab.6invOptTime}
	\begin{tabular}{c c c c c}
		\toprule       Inv & $K_P$ (\%) & $K_Q$ (\%) & $T_f$ (ms) & $T_v$ (ms) \\ \midrule
		             1-2               & 2          & 3.1        & 86         & 96         \\
		             3               & 2          & 3.1        & 154        & 101        \\
		            4-6              & 2          & 3.1        & 123        & 100        \\ \bottomrule
	\end{tabular}
\end{table}


Summarizing, the results show a good match between measurements and the inverter models. Even though the manual tuning results, as shown in Figs.~\ref{fig.2InvMeasSim} and~\ref{fig.6InvStabSimMeas}, are satisfactory for many applications, the results obtained with the proposed parameter tuning  algorithm, shown in Figs.~\ref{fig.2InvMeasSim}~-~\ref{fig.6InvOptTime}, are better with respect to oscillation damping. Additionally, no expert knowledge is necessary for the automatic parameterization, and the parameterization process can be done in less time than by manual tuning.

{
\section{Performance comparison}
\label{sec.PerfComparison}
}
{
In this section, we focus on the computational efficiency of methods considering structured $\Hinf$ controller synthesis. For this purpose, we adapt five methods from literature to the considered application. Thereby, we observe the computation time of these methods on several numerical examples. An additional comparison is presented in~\cite{mesanovic2019comparison}, where the focus of the comparison is on the deliverable results in the time- and frequency domain, observed on an exemplary numerical example. We use for the comparison a Windows computer with an Intel$^\circledR$ i7-4810MQ CPU running at 2.8 GHz and with 8 GB of RAM. Note that the presented times should only give a hint about the computational complexity, as more tailored methods would allow to further decrease the computation time.
}

{
The comparison is done with respect to the computation times, achieved $\Hinf$ norm, as well as scalability. For this purpose we consider the PK-iteration~\cite{Mesanovic18ACC}, path-following method~\cite{Hassibi1999}, linearized convex-concave decomposition~\cite{dinh2012combining}, non-smooth optimization from  the systune toolbox in MATLAB~\cite{systune}, and the projection approach from~\cite{kanev2004robust}.
Thereby, we expand these methods to be applicable to nonlinear parameter dependencies. Other methods, which assume a static output feedback formulation of the problem, such as cone complementarity linearization and sequential linear programming matrix method~\cite{schuler2011design,befekadu2006robust} are not included in the comparison, as they introduce transformations only applicable to specific linear parameter dependencies.
}

{
We begin the comparison with a small system with two power plants, 28 states and 16 optimization parameters. The system is obtained by taking the four power plant grid from~\cite{kundur93a}, disconnecting two power plants, and by dividing the load in half. The initial $\Hinf$ norm of the system is 23. 
The grid $\Omega = \left\{ \omega_k \Big| \omega_k = 4 + 0.1\cdot (k-1), k = 1...31 \right\}$ is chosen for the frequency sampling method. 
Table~\ref{tab.2Gen_HinfSynthesisComparison} summarizes the comparison results for this system. All methods improved the system $\Hinf$ norm.  Thereby, the systune toolbox, deploying non-smooth optimization, achieves the smallest $\Hinf$ norm. The frequency sampling method achieves a similar norm, but in 20\% of time systune required. They are followed by the PK iteration, which achieves a similar system norm as the convex-concave decomposition, but in 75\% of the time. 
The path-following method achieves a worse system norm, but with the second-fastest time. 
With the projection method, the obtained system norm is the largest, and, even on the small system, the optimization time is over 20 minutes. This is due to the necessity for an eigenvalue decomposition in every step in the inner optimization in one iteration. Since our goal is to find scalable optimization methods, we do not consider the projection method in larger examples.
}

{
\begin{table}[tb]
	\centering
	\caption{Comparison of structured $\Hinf$ synthesis methods on a system with two power plants, 28 states and 16 optimization parameters. The initial $\Hinf$ norm of the system is 23.}
	\label{tab.2Gen_HinfSynthesisComparison}
	\begin{tabular}{c c c c}
		\toprule       
		method      & $\Hinf$ norm & \makecell{comp. \\ time} & out. iter \\ \midrule
		\textbf{Frequency samp.}              & \textbf{0.48}         & \textbf{6 s}              & \textbf{3}                \\
		PK iter.                 & 0.72         & 83 s             & 50               \\
		Path-following                 & 1.1          & 29 s             & 16               \\
		\makecell{Convex-concave \\ decomposition}         & 0.83         & 112 s            & 50               \\
		\textit{systune}               & 0.42         & 31 s             & NA               \\
		Projection method              & 4.8          & 1300 s           & 7                \\ \bottomrule
	\end{tabular}
\end{table}
}

{
The second example is the four power plant system from~\cite{kundur93a}. This system has 56 states, 32 optimization parameters, and an initial $\Hinf$ norm of 11.5. Table~\ref{tab.4Gen_HinfSynthesisComparison} summarizes the results with the considered methods. Again, systune achieved the smallest $\Hinf$ norm of the system, reducing the norm to approx. 2.3\% of the initial value. The frequency sampling method, using the same frequency grid as in the previous example, achieved similar results by reducing the system norm to 4.9\% of the initial value, but with a 25 times faster computation time. The PK iteration achieves the third-best $\Hinf$ norm, with a computation time similar to \textit{systune}.
}

\begin{table}[tb]
	\centering
	\caption{Comparison of structured $\Hinf$ synthesis methods on a system with four power plants, 56 states and 32 optimization parameters. The initial $\Hinf$ norm of the system is 11.5.}
	\label{tab.4Gen_HinfSynthesisComparison}
	\begin{tabular}{c c c c}
		\toprule       
		method & $\Hinf$ norm & \makecell{comp. \\ time} & out. iter. \\ \midrule
		\textbf{Frequency samp.}        & \textbf{0.56}         & \textbf{27 s}             & \textbf{5}                \\
		PK iter.           & 0.65         & 670 s            & 50               \\
		Path-following          & 1.9          & 640 s            & 50               \\
		\makecell{Convex-concave \\ decomposition}   & 1.18         & 1264 s           & 50               \\
		\textit{systune}         & 0.27         & 687 s            & NA               \\ \bottomrule
	\end{tabular}
\end{table}

{
The third considered example is the ten power plant system from~\ref{App.IEEE39Models}, consisting of 190 states and 100 controller parameters. For this system, most of the methods reach the limit for practically tolerable computation times. The tuning with \textit{systune} could not be done due to an ''out of memory`` error. The optimization results are shown in Table~\ref{tab.10Gen_HinfSynthesisComparison}. The PK iteration, path-finding, and convex-concave decomposition have large computation times due to the presence of the Lyapunov matrix, whose size scales quadratically with the number of states. For this system, the total computation time for these methods is in the range of one to several days.
Only the frequency sampling method was able to find a solution in reasonable time by using the grid $\Omega = \{0.01, 3, 4, 5, 6 \} \bigcup \left\{ \omega_k \Big| \omega_k = 7 + 0.1\cdot (k-1), k = 1...81 \right\}$. 
Thereby, the density of the grid was increased in the frequency interval with resonant peaks, shown in Fig.~\ref{fig.IEEE39Sigmas}.
}

\begin{table}[tb]
	\centering
	\caption{Comparison of structured $\Hinf$ synthesis methods on the IEEE 39 bus system with ten power plants, 190 states and 100 optimization parameters. The initial $\Hinf$ norm of the system is 27.7.}
	\label{tab.10Gen_HinfSynthesisComparison}
	\begin{tabular}{c c c c}
		\toprule       
		method     & $\Hinf$ norm & comp. time                     & out. iter.       \\ \midrule
		             \textbf{Frequency samp. }              &\textbf{ 2.33}         & \textbf{266s}                          & \textbf{4}                \\
		                 PK iter.                  &              & \makecell{3800s\\ per iter.}   &                  \\
		              Path-following               &              & \makecell{8257s \\ per iter.}  &                  \\
		\makecell{Convex-concave \\ decomposition} &              & \makecell{13700s \\ per iter.} &                  \\
		             \textit{systune}              &              & \makecell{out of \\ memory}    &                  \\ \bottomrule
	\end{tabular}
\end{table}


\begin{figure}[tb]
	\centering
	\subfigure[$\Hinf$ norm.]{%
		\includegraphics[width=0.5\textwidth]{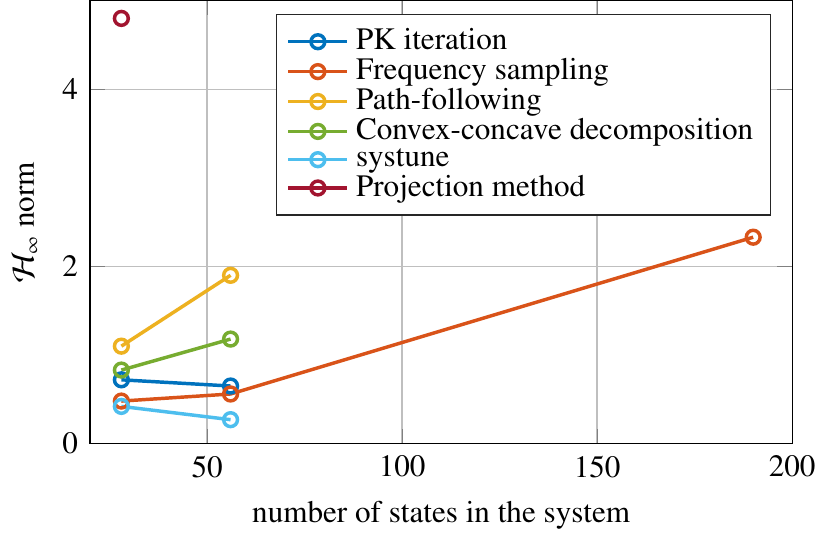}
		\label{fig.HinfNormComparison}}
	\subfigure[Computation times. Squares represent estimated times.]{%
		\includegraphics[width=0.5\textwidth]{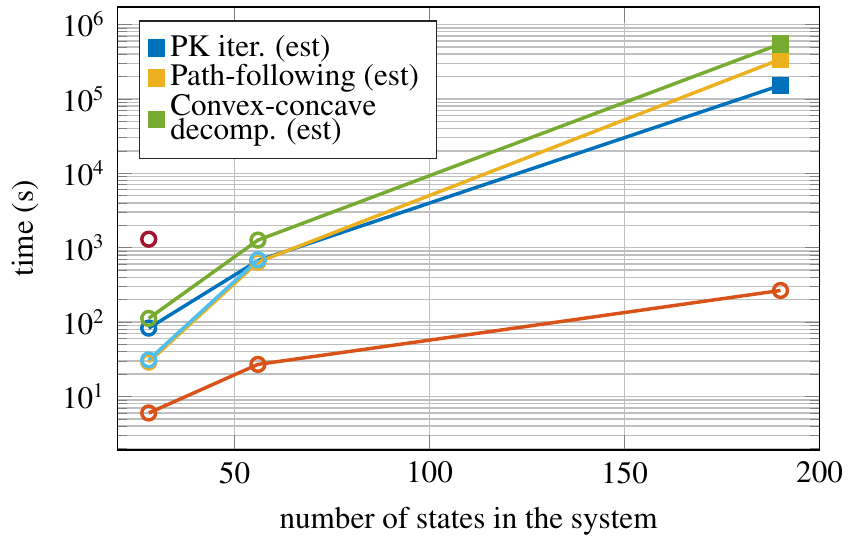}
		\label{fig.timeComparison}}
	\caption{Comparison of computation times and achieved $\Hinf$ norms for the different methods.}
	\label{fig.ComparisonAll}
\end{figure}

{
Figure~\ref{fig.HinfNormComparison} summarizes the obtained $\Hinf$ norms from the previous tables. It shows that the proposed frequency sampling method is only partially outperformed by \textit{systune} for the 4-power plant system. However, it is the only one capable of tuning the parameters of the 10-power plant system in reasonable time.
A summary of the computation times for the different methods is shown in Fig.~\ref{fig.timeComparison}. Thereby, squares in the plot represent estimated times obtained by multiplying the computation time per iteration from Table~\ref{tab.10Gen_HinfSynthesisComparison} with the number of iterations from the smaller systems. It shows that the frequency sampling method achieves orders-of-magnitude smaller computation times, while being only behind systune with respect to the achieved $\Hinf$ norm. However, the performance of the frequency sampling is dependent on the size of the frequency grid. It is important to choose the smallest grid that covers the necessary frequency area with sufficient density. This, however, was not a problem for the considered systems.
}

\section{Conclusions}
\label{sec.Conclusion}

Tuning of existing controller parameters to reject disturbances in power systems, resulting in oscillations, allows to accommodate changing power system dynamics, e.g. due to an increasing share of renewable generation. 
Automatic tuning algorithms could allow the system operator to retune the parameters of the existing controllers to account for changes and disturbances.
We proposed an algorithm for structured $\Hinf$ controller synthesis and applied it in simulations and experiments to power systems. We proved that the proposed algorithm will produce stabilizing controller parameters given an initial stabilizing controller. 
We applied the $\Hinf$ optimization method in two simulation studies containing power systems with 10 and 53 generators. In both cases, the $\Hinf$ norm of the systems was reduced by more than a factor of five, while the time-response to disturbance steps was also improved. Furthermore, we experimentally evaluated the approach on a testbed islanded microgrid. As shown, the used inverter model corresponds well to measurements. Furthermore, the developed tuning method leads to much better results than one achieves by existing manual tuning, with less time and a reduced amount of necessary expert knowledge of the system. {Finally, we compared the proposed approach to others from the literature to demonstrate the scalability of the proposed approach. Future work will focus on more detailed inverter models, as well as including power grid dynamics in the power grid model.}

\bibliography{references_amer_mesanovic}{}

\appendix


\section{Application of Theorem~\ref{thm.Stability}}
\label{App.ExampleSystem}
To underline the claim of Theorem~\ref{thm.Stability}, we consider the small system
\begin{align}
G'(s) = \begin{pmatrix}
\frac{s+2}{(s+1)(s+3)} & \frac{s-3}{s^2 + 3s + 3} \\
\frac{s^2+4s + 10}{(s+3) (s^2 + s + 1)} & \frac{s+4}{(s+1)(s+2)}.	
\end{pmatrix}
\end{align}
This system has the pole set $\tilde \cS = \{-1, -2, -3, -1.5 \pm j 0.87, -0.5 \pm j 0.87\}$, where the poles $s = - 1$, and $s= - 2$ have a multiplicity of 2.
Figure~\ref{fig.examplePoles} shows the largest singular value of $G'(s )$.
\begin{figure}[tb]
	\centering
	\includegraphics[width=1\columnwidth]{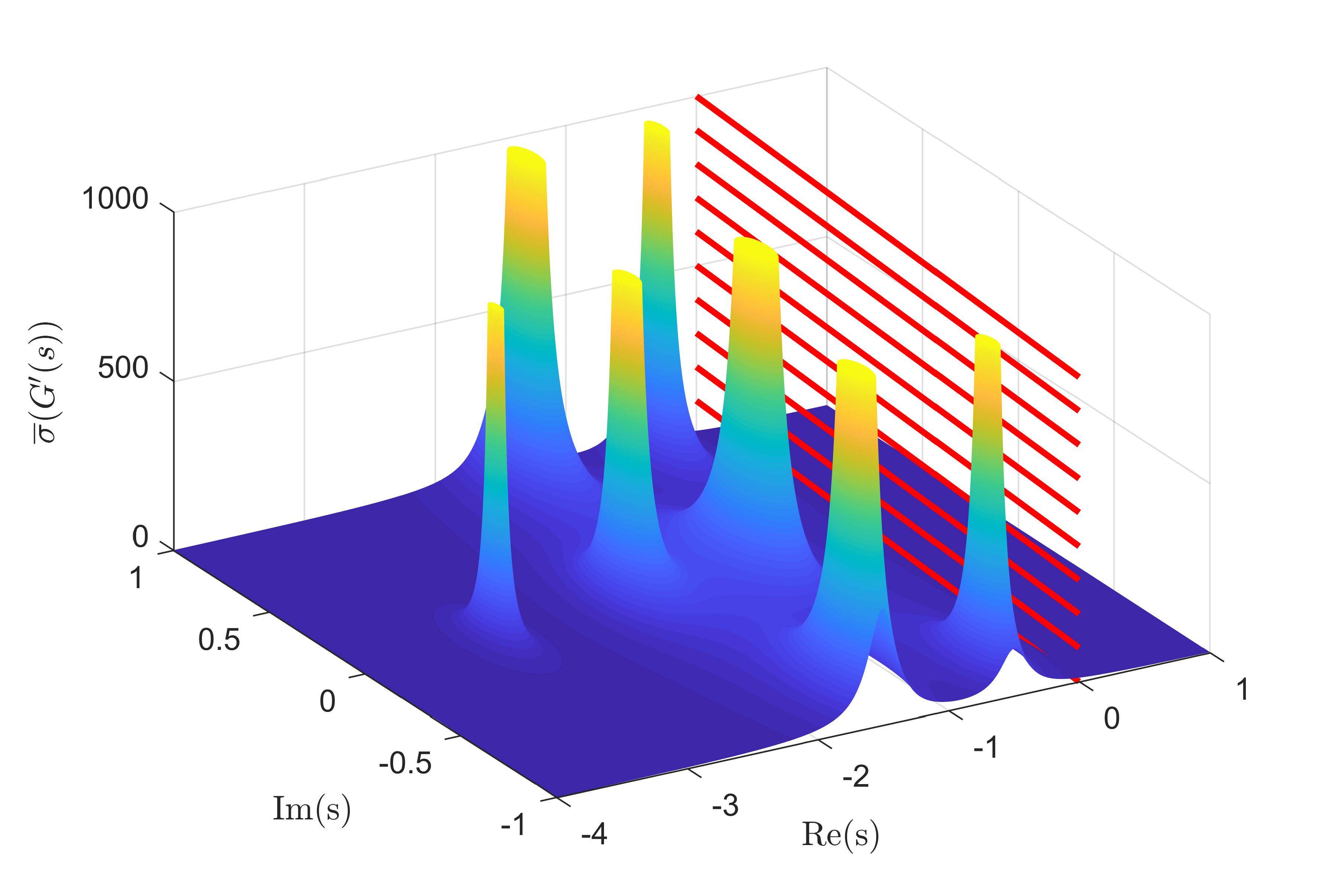}
	\caption{Graphical representation of $\bigSigma(G'(s))$; $\bigSigma(G'(s))$ approaches infinity in the surrounding of any $s_{pij}\in \tilde \cS$. Red lines mark where the system $\Hinf$ norm is minimized.}
	\label{fig.examplePoles}
\end{figure}
\begin{table}[tb]
	\centering
	\caption{{The initial step size $\vDK$ for the controller parameters of each power plant in the IEEE 39 bus system, shown in Figs.~\ref{fig.Exciter}-~\ref{fig.TGOV}.}}
	\label{tab.DKIEEE39}
	\begin{tabular}{ccccccccccc}
		\toprule       
		& $R_{p,i}$ & $K_{A,i}$ & $T_{fd,i}$ & $K_{fd,i}$ & $K_{S,i}$ & $T_{w,i}$ & $T_{1,i}$ & $T_{2,i}$ & $T_{3,i}$ & $T_{4,i}$ \\ \midrule
		$\vDK$&0.05    & 60      & 500       & 1       &      6     &      50  &    1      &     50      &    1       &     50      \\ \bottomrule
	\end{tabular}
\end{table}
It confirms that the system singular values approach infinity as s approaches one of the system poles, see Lemma~\ref{lem.PoleLimitSV}. 
We minimize the $\Hinf$ norm of the system by minimizing the largest singular value of $G'$ on the imaginary axis, i.e. $\bigSigma(G'(j \omega))$.
The plane with $\myRe(s)=0$, along which $\bigSigma(G'(j \omega))$ is minimized, is represented with red lines in Fig.~\ref{fig.examplePoles}.
If the poles approach the imaginary axis, $\max_{\omega \in \R} \| \bigSigma (G'(j\omega)) \|_\infty$ rises to large values. Since the $\Hinf$ norm is minimized in every optimization step, the minimization of the $\Hinf$ norm will never lead to the system poles reaching, and crossing, the imaginary axis.


\section{Controller models used for the IEEE 39 bus 10 power plant model}
\label{App.IEEE39Models}

Figures~\ref{fig.Exciter},~\ref{fig.PSS}, and~\ref{fig.TGOV} show the power plant controller models used for modeling of the IEEE 39 bus grid in Subsection~\ref{subsec.IEEE39}. All models are a part of the system proposed in~\cite{moeini2015open}. We optimize the gain $K_{A,i}$ of the AVR$_i$, shown red in Fig.~\ref{fig.Exciter}. We also optimize all parameters of PSS$_i$, except the physically-determined sensor time constant, marked red in Fig.~\ref{fig.PSS}. The governor and turbine model, shown in Fig.~\ref{fig.TGOV}, has one optimization parameter, marked in red. It is the proportional gain of the governor.
All presented models are standard IEEE models.  {The initial maximal allowed step size for all controllers is shown in Table~\ref{tab.DKIEEE39}.}

\begin{figure}[tb]
	\centering
	\includegraphics[width= 0.7\columnwidth]{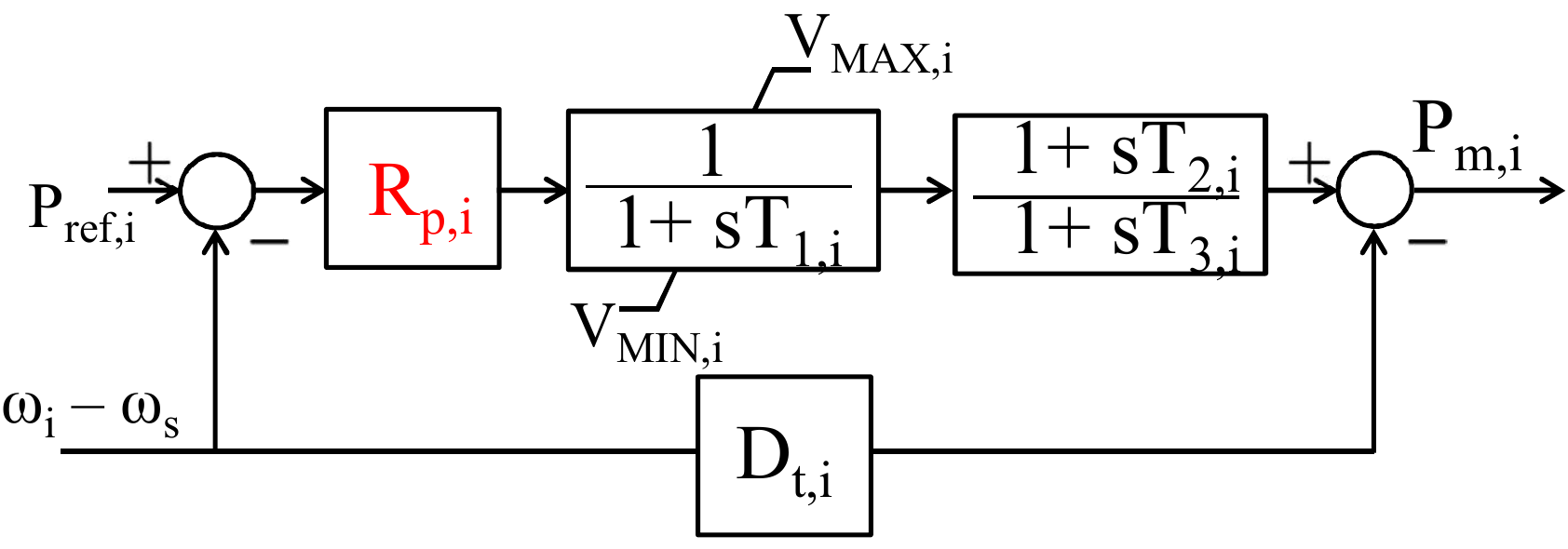}
	\caption{The TGOV1 turbine and governor model used for the power system model in Subsection~\ref{subsec.DynaGrid}. The frequency droop gain of the governor $R_{p,i}$ is an optimization variable.}
	\label{fig.TGOV1Model}
\end{figure}

\begin{figure}[tb]
	\centering
	\includegraphics[width=0.8\columnwidth]{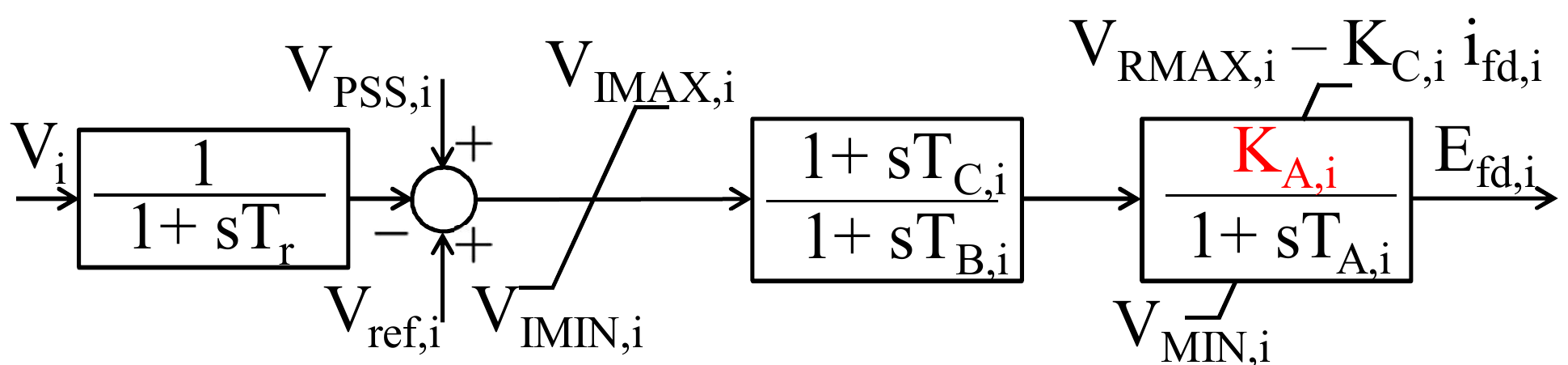}
	\caption{ The standard EXAC4 model of the AVR$_i$, where $T_{r,i}$ is the transducer time constant, $T_{C,i}$ and $T_{B,i}$ are dynamic gain reduction time constants, $K_{A,i}$ is the AVR gain, and $T_{A,i}$ is the AVR lag time constant. We assume that $K_{A,i}$, marked red, is tunable.}
	\label{fig.EXAC4}
\end{figure}

\begin{figure}[tb]
	\centering
	\includegraphics[width=1.0\columnwidth]{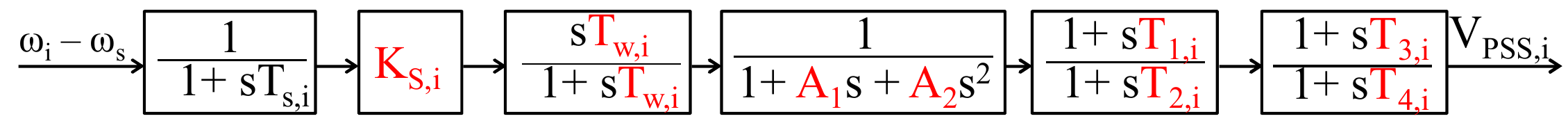}
	\caption{The standard IEEE PSS 1A model, where $K_{S,i}$ is the PSS gain, $T_{w,i}$ is the washout time constant, $T_{1,i}$-$T_{4,i}$ are the lead-lag filters time constants, $T_{s,i}$ is the sensor time constant, and $A_1$ and $A_2$ are notch filter parameters. All of the PSS parameters are tunable, except the sensor time constant.}
	\label{fig.IEEEPSS1A}
\end{figure}

\section{Controller models used for the 53 generator power system model}
\label{App.DynaGridModels}

The reduced European grid defined in Subsection~\ref{subsec.DynaGrid} uses controllers shown in Figs.~\ref{fig.TGOV1Model},~\ref{fig.EXAC4}, and~\ref{fig.IEEEPSS1A}. Similar to the IEEE 39 bus controller models, the gains of TGOV$_i$ and AVR$_i$ are tuned, as well as all parameters of PSS$_i$. For this power system, the standard model TGOV1 is used for TGOV$_i$, the EXAC4 model is used for AVR$_i$, and the IEEE PSS 1A model is used for PSS$_i$. All presented controller models are standard IEEE models.

\end{document}